\documentclass[aps,pra,reprint,final,floatfix,longbibliography,notitlepage]{revtex4-2}
\usepackage[utf8]{inputenc}
\usepackage{bm,eucal,amssymb,amsmath,amsthm,graphicx,color,physics,csquotes,booktabs}
\usepackage[hidelinks]{hyperref}
\usepackage[ruled,vlined]{algorithm2e}
\usepackage{enumitem,centernot}

\newtheorem{definition}{Definition}[section]
\newtheorem{theorem}{Theorem}[section]
\newtheorem{lemma}{Lemma}[section]
\newtheorem{notation}{Notation}[section]
\newtheorem{corollary}{Corollary}[section]
\newtheorem{proposition}{Proposition}[section]
\newtheorem{remark}{Remark}[section]

\DeclareMathOperator{\MS}{MS}
\DeclareMathOperator{\spec}{spec}
\DeclareMathOperator{\SU}{SU}

\graphicspath{{./figures/}}

\begin{document}

\title{
Magic-protected entanglement and Clifford-irreducible structure in magic state space
}

\author{Alejandro Borda Kuhlmann}
\author{Julián Rincón}
\affiliation{Department of Physics, Universidad de los Andes, Bogotá, D.C. 111711, Colombia}

\date{\today}

\begin{abstract}
    We develop a Clifford-orbit framework for studying magic-protected entanglement, which we refer to as \emph{magical entanglement}: the part of bipartite entanglement that remains after optimal stabilizer simplification. This construction leverages residual entanglement under Clifford reduction as a state-level organizing principle for magic state space. It defines canonical representatives, spectra, and ranks that characterize the Clifford-irreducible structure of a state. We identify two regimes of the magic-entanglement interplay. In the $T$-magic regime, local nonstabilizer resources can coexist with entanglement, but the protected component remains weak and state-dependent. In the $W$-magic regime, by contrast, entanglement is Clifford-irreducibly tied to nonstabilizerness, producing typical, strongly self-averaging behavior. Analytical examples and random-circuit numerics support a crossover from broad $T$-magic fluctuations to concentrated, Haar-like $W$-magic behavior. These results identify magical entanglement as an orbit-level diagnostic of how nonstabilizerness protects quantum correlations against Clifford reduction.
\end{abstract}

\maketitle

\tableofcontents

\section{Introduction}

Quantum computation inherits much of its power from operating beyond the stabilizer--Clifford realm, which occupies the boundary between efficient classical simulability and universal quantum computation. The stabilizer formalism provides a compact description of states and operations generated by Clifford gates, while the Gottesman-Knill theorem shows that circuits built from such gates can be simulated efficiently on a classical computer~\cite{Gottesman_1999, Aaronson_2004}. This is conceptually important: Clifford dynamics can generate highly nontrivial quantum correlations without by itself yielding quantum universality. One is therefore led to supplement stabilizer operations with non-Clifford resources, commonly described as \emph{magic} or \emph{nonstabilizerness}, to reach universality~\cite{Howard_2017, Liu_2022}.

Entanglement alone, then, is not a sufficient diagnostic of the kind of nonclassicality relevant for computational power~\cite{Jozsa_2003}. Highly entangled states can still lie entirely within the stabilizer sector, so entanglement by itself does not cleanly distinguish classically tractable from genuinely non-Clifford resources~\cite{Dowling_2025, Andreadakis_2026}. Magic provides the complementary notion, but it is also incomplete when viewed in isolation: in many-body settings, the relevant question is not only how far a state is from the stabilizer framework, but how this departure is organized in the presence of strong quantum correlations~\cite{Liu_2022, Qian_2025}. The key issue, therefore, is not entanglement alone nor magic alone, but the form in which the two appear \emph{together} in a quantum state.

A broad resource-oriented literature has accordingly developed a variety of ways to quantify magic. Early approaches emphasized quasiprobability and negativity-based characterizations within the resource theory of stabilizers~\cite{Veitch_2014}, while subsequent work introduced operational monotones such as the robustness of magic~\cite{Howard_2017}. From the perspective of simulation and synthesis, one encounters stabilizer rank~\cite{Bravyi_2019}, together with stabilizer extent, stabilizer nullity, and the dyadic monotone~\cite{Beverland_2020}. Complementary families include stabilizer R\'enyi entropies and stabilizer entropies~\cite{Leone_2022, Haug_2023, Bittel_2026}, as well as entropic and fidelity-based monotones such as the relative entropy of magic and stabilizer fidelity~\cite{Rubboli_2024}. More recently, these ideas have been extended to many-body settings and random circuit ensembles, where nonstabilizerness must be understood in the presence of extensive entanglement, spatial structure, and the non-Clifford cost of generating complex dynamics~\cite{Liu_2022, Leone_2026}. Together, these developments provide powerful ways of quantifying pure-state magic, but they mostly characterize the amount of nonstabilizerness rather than how it is organized with entanglement.

Recent work has therefore begun to move beyond global scalar quantifiers toward more structured notions of magic in many-body systems. States have been organized into coarse-grained \emph{magic classes} through circuit equivalence and convolution-group fixed points~\cite{Bu_2024}, while combined-resource measures such as nonlocal nonstabilizerness have been proposed to capture the joint presence of entanglement and magic at the state level~\cite{Qian_2025}. Related studies have shown that this interplay can induce a finer organization of state space, ranging from entanglement-dominated and magic-dominated regimes with distinct computational properties to Schmidt-orbit structure in random bipartite pure states~\cite{Gu_2025, Iannotti_2025}. Complementary progress has also emerged in operator space, where nonlocal nonstabilizerness has been linked to operator entanglement, and where local-operator entanglement has been shown to upper-bound several magic monotones while approximately coinciding with them in large random circuits~\cite{Dowling_2025, Andreadakis_2026}. At the state level, direct links between nonstabilizerness and entanglement-spectrum structure have likewise been identified, for instance through the relation between magic and entanglement-spectrum flatness~\cite{Tirrito_2024}. In parallel, notions such as long-range nonstabilizerness and pseudomagic have emphasized that the physically relevant content of magic depends on how it is distributed, protected, and accessed in many-body systems~\cite{Gu_2024, Korbany_2025}, while recent results on shallow-magic circuits show that classical hardness can depend crucially on the distribution of magic across layers rather than only on its total amount~\cite{Zhang_2025}.

Clifford-augmented tensor-network methods have shown that incorporating Clifford structure into variational ans\"atze can improve the representation of entangled states~\cite{Qian_2024_CAMPS, MasotLlima_2024}, while nonstabilizerness entanglement entropy has been proposed as an entanglement measure obtained after removing Clifford contributions~\cite{Huang_2024_NSEE}. Stabilizer disentangling provides a complementary many-body perspective, in which local Clifford operations are used to reduce entanglement and reveal how disentangling efficiency is tied to the magic content of the state~\cite{Frau_2025}. These developments are close in spirit to the viewpoint pursued here: the relevant object is not the total entanglement of a state, but the component that remains after the stabilizer-compatible contribution has been removed.

Here we develop this residual-entanglement viewpoint into a geometric state-level Clifford-orbit framework. Our starting point is the observation that the relevant question is not simply how much magic or entanglement a state possesses, but rather what part of its entanglement remains genuinely tied to non-Clifford structure after optimal stabilizer simplification. This leads naturally to the notion of \emph{magical entanglement}, or \emph{magic-protected entanglement}, designed to isolate the portion of quantum correlations that cannot be dismissed as either purely stabilizer-generated or purely local nonstabilizerness. As we show below, this perspective provides a Clifford-orbit organization of magic state space, with canonical representatives and orbit invariants, and separates states whose nonstabilizerness is effectively removable from those that retain irreducible quantum correlations of a genuinely nonstabilizer kind.

The paper is organized as follows. Section~\ref{sec:MS_math} introduces magical entanglement (MS), and the $T$- and $W$-magic regimes. The canonical form and orbit invariants are developed in Sec.~\ref{sec:MS_form}, while Sec.~\ref{sec:num_res} presents numerical evidence for the $T$- to $W$-magic crossover. Finally, Sec.~\ref{sec:discuss} discusses the Clifford-irreducible interpretation of the framework, its relation to other approaches, and future directions.

\section{Magical entanglement and regime structure}
\label{sec:MS_math}

To study the magic-entanglement interplay, we first introduce a quantity that isolates the part of entanglement that is tied to nonstabilizer structure. Rather than minimizing magic over local changes of basis, the viewpoint taken here is complementary: we minimize entanglement over the Clifford orbit of the state. The resulting value measures how much bipartite entanglement remains after all stabilizer transformations have been taken into account. This leads to the following definition.

\begin{definition}[Magical entanglement]
    Let $\ket\psi \in (\mathbb{C}^2)^{\otimes n}$ be an $n$-qubit pure state.  We define its \emph{magical entanglement} as
    \begin{equation}
        \label{eq:MS}
        \MS_{AB}( \ket\psi ) := \min_{C \in \mathcal{C}_n} S_{AB}( C \ket\psi ),
    \end{equation}
    where $\mathcal{C}_n$ is the $n$-qubit Clifford group and $S_{AB}$ is any standard bipartite entanglement measure, such as the entanglement entropy, evaluated across the bipartition $AB$. For simplicity, we write $\MS(\ket{\psi})$ when the bipartition is fixed or clear from context.
\end{definition}

By a slight abuse of notation, $S_{AB}(\ket\psi)$ denotes the value of the chosen entanglement measure for $\ket\psi$ with respect to the bipartition $AB$, computed from the corresponding reduced density matrix. Intuitively, MS quantifies the part of the entanglement that cannot be removed by Clifford operations. A nonzero value therefore signals entanglement that is protected from stabilizer reduction and tied to nonstabilizer structure.

\begin{proposition}\label{prop:cliff_orb}
    Let $\mathcal{O}_{\mathcal{C}_n}(\ket\psi)$ denote the orbit of $\ket\psi \in (\mathbb{C}^2)^{\otimes n}$ under the action of the Clifford group $\mathcal{C}_n$. Then, for any state $\ket\psi$ with a fixed bipartition $AB$:
    \begin{equation*}
        \MS_{AB}( \ket\psi ) = 0 \iff \exists \ket\phi \in \mathcal{O}_{\mathcal{C}_n}(\ket\psi): \ket\phi = \ket\phi_A \otimes \ket\phi_B
    \end{equation*}
    is separable (product state) across $AB$. The Clifford orbit is defined as
    \begin{equation}
        \label{eq:Cliff_orb}
        \mathcal{O}_{\mathcal{C}_n}(\ket\psi) := \{ C\ket\psi : C \in \mathcal{C}_n \}.
    \end{equation}
\end{proposition}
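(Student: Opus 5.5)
The argument rests on two facts. First, the Clifford group $\mathcal{C}_n$ is finite modulo global phases. Second, a standard bipartite entanglement measure $S_{AB}$ is faithful on pure states: it is nonnegative and vanishes exactly on states that are product across $AB$. For the entanglement entropy this follows from the Schmidt decomposition. The reduced state $\rho_A$ is pure if and only if there is a single Schmidt coefficient, and $S(\rho_A)=-\sum_i\lambda_i\log\lambda_i=0$ if and only if the spectrum of $\rho_A$ is $\{1,0,\dots,0\}$. The same holds for Rényi entropies and the other measures the definition has in mind. We therefore assume faithfulness as part of what ``standard'' means in Eq.~\eqref{eq:MS}.

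The first step is to check that the minimum in Eq.~\eqref{eq:MS} is attained. Global phases do not change $S_{AB}$, so $S_{AB}(C\ket\psi)$ takes only finitely many values as $C$ ranges over $\mathcal{C}_n$. The minimum is therefore achieved by some $C^\star\in\mathcal{C}_n$, and $\MS_{AB}(\ket\psi)=S_{AB}(C^\star\ket\psi)$. Without finiteness, an infimum equal to zero would not by itself give an exact product state. Finiteness is what rules this out, so it is the one point that needs explicit mention.

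For ($\Rightarrow$), suppose $\MS_{AB}(\ket\psi)=0$. Then $S_{AB}(C^\star\ket\psi)=0$, and by faithfulness $\ket\phi:=C^\star\ket\psi$ is a product state $\ket\phi_A\otimes\ket\phi_B$. It lies in $\mathcal{O}_{\mathcal{C}_n}(\ket\psi)$ by Eq.~\eqref{eq:Cliff_orb}.

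For ($\Leftarrow$), suppose some $\ket\phi=C\ket\psi$ in the orbit is a product state. Then $S_{AB}(C\ket\psi)=0$. Since $S_{AB}\ge 0$ everywhere, $0\le \MS_{AB}(\ket\psi)\le S_{AB}(C\ket\psi)=0$, which gives $\MS_{AB}(\ket\psi)=0$.

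The main obstacle is therefore not technical but one of hypotheses: the equivalence holds for any $S_{AB}$ that is nonnegative and faithful on pure states. I would state this requirement explicitly, verify it for the entanglement entropy via the Schmidt decomposition as above, and note that the finiteness of $\mathcal{C}_n$ modulo phases makes the minimum well defined.
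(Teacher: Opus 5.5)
Your proof is correct and follows the same route as the paper: the forward direction uses that the minimum is attained together with faithfulness of $S_{AB}$ (zero entanglement implies a product state across $AB$), and the reverse direction uses that product states have zero entanglement together with nonnegativity. Your explicit remarks about finiteness of $\mathcal{C}_n$ guaranteeing attainment (which the paper notes separately in its Clifford-invariance proof) and about faithfulness spell out what the paper's two-line proof uses implicitly.
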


\begin{proof}
    The left-to-right implication is given by the definition of MS and the fact that zero entanglement implies separability. The right-to-left implication follows because a product state across $AB$ has zero entanglement across that bipartition.
\end{proof}

Intuitively, this proposition describes how having a zero MS value with a given (fixed) bipartition is indicative of an underlying \enquote{weakness} of the magic-entanglement relation: the state is Clifford-separable across that bipartition. Note as well that this result is a particular case of the first statement in Lemma~\ref{lemma:orbits}, when considering one of the two states to be a product.

\begin{proposition}[Clifford invariance]
    Let $\tilde{C} \in \mathcal{C}_n$; then, $\MS( \tilde{C} \ket\psi ) = \MS( \ket\psi )$. Intuitively, magical entanglement is invariant under arbitrary Clifford operations.
\end{proposition}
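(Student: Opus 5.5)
The plan is to show that the Clifford orbit of $\tilde{C}\ket\psi$ coincides with the orbit of $\ket\psi$. The minimum in Eq.~\eqref{eq:MS} is then taken over the same set of states in both cases. Concretely, I would write
\begin{equation*}
    \MS(\tilde{C}\ket\psi) = \min_{C \in \mathcal{C}_n} S_{AB}(C\tilde{C}\ket\psi),
\end{equation*}
and reindex the minimization through the map $C \mapsto C' := C\tilde{C}$.

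The key step is that right multiplication by a fixed group element $\tilde{C}$ is a bijection of $\mathcal{C}_n$ onto itself. Because $\mathcal{C}_n$ is a group, $C\tilde{C} \in \mathcal{C}_n$ for every $C \in \mathcal{C}_n$. The inverse map $C' \mapsto C'\tilde{C}^{-1}$ also stays in $\mathcal{C}_n$. Hence $\{C\tilde{C} : C \in \mathcal{C}_n\} = \mathcal{C}_n$. In the language of Eq.~\eqref{eq:Cliff_orb}, this says $\mathcal{O}_{\mathcal{C}_n}(\tilde{C}\ket\psi) = \mathcal{O}_{\mathcal{C}_n}(\ket\psi)$. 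The function $S_{AB}$ is therefore minimized over identical sets, and the two minima agree.

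One technical point needs care. $\mathcal{C}_n$ is usually defined modulo global phases, or else it contains phases $e^{i\theta}$, so one should check that $S_{AB}$ is insensitive to a global phase. This holds because reduced density matrices are unchanged by such a phase. For the same reason, the choice of phase convention for the Clifford group does not matter.

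The second point is whether the minimum is actually attained. Modulo phases, $\mathcal{C}_n$ is finite, so the orbit contains finitely many distinct states up to phase and the minimum exists. Even if one insisted on an infimum, the equality of the two index sets would still give the result directly. I do not expect a real obstacle; the only step requiring any attention is the group-closure and bijection argument.
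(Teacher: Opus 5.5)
Your proof is correct and is essentially the paper's argument: both reindex the minimization through the bijection $C \mapsto C\tilde{C}$ of $\mathcal{C}_n$ (equivalently, the Clifford orbits of $\ket\psi$ and $\tilde{C}\ket\psi$ coincide), and both use finiteness of $\mathcal{C}_n$ to guarantee the minimum is attained. Your remark that $S_{AB}$ is insensitive to global phases is a harmless extra check that the paper leaves implicit.
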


\begin{proof}
    We can rewrite the definition of magical entanglement in the following manner, making the equality evident:
    \begin{equation*}
    \begin{split}
        \MS( \ket\psi )
        &= \min_{C \in \mathcal{C}_n} S( C \ket\psi )
        = \min_{C' \in \mathcal{C}_n} S( C' \tilde C \ket\psi ) \\
        &= \MS( \tilde{C} \ket\psi ).
    \end{split}
    \end{equation*}
    The middle equality holds due to $\mathcal{C}_n$ being closed and thus any $C \in \mathcal{C}_n$ can be written as $C = C'\tilde{C}$ which is equivalent to $(C')^{\dagger}C = \tilde{C}$. Furthermore, note that $\mathcal{C}_n$ is a finite group and thus the minimum is achieved by some element of the group. 
\end{proof}

Since the Clifford group is finite, the minimum is always attained, and MS is constant on each Clifford orbit, serving as a label for it. This allows us to use Clifford orbits as well-defined objects. This will be a central object in this work.

\begin{proposition}\label{prop:MS_props}
Magical entanglement satisfies the following basic properties:

    \emph{(a) Positivity:} By the definition of any admissible measure of bipartite entanglement it follows that MS is positive semi-definite: $\forall \ket\psi \in (\mathbb{C}^2)^{\otimes n},\ \MS( \ket\psi ) \geqslant 0$, because $S( \ket\psi ) \geqslant 0$.

    \emph{(b) Product states:} Any product state $\ket{\phi_1} \otimes \cdots \otimes \ket{\phi_n}$, with $\ket{\phi_j} \in \mathbb C^2$, satisfies $\MS(\ket{\phi_1} \otimes \cdots \otimes \ket{\phi_n}) = 0$ for any bipartition, even when the single-qubit states are nonstabilizer.

    \emph{(c) Non-trivial states:} If $\MS(\ket\psi)>0$, then no Clifford operation maps $\ket\psi$ to a product state across the chosen bipartition. Thus, the entanglement across that cut is Clifford-irreducible.

    \emph{(d) Entanglement upper bound:} For every pure state $\ket{\psi} \in (\mathbb{C}^2)^{\otimes n}$, $\MS( \ket\psi ) \leqslant S(\ket\psi)$. This holds since the identity $I \in \mathcal C_n$ is an admissible Clifford operator in the minimization procedure. Hence, $\MS( \ket\psi ) = \min_{C \in \mathcal C_n} S(C \ket\psi) \leqslant S(I \ket\psi) = S(\ket\psi)$.
\end{proposition}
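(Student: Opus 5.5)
The plan is to verify the four items one at a time. Each reduces to the definition of $\MS$ in Eq.~\eqref{eq:MS} together with two standard facts. First, an admissible bipartite entanglement measure $S_{AB}$ is nonnegative. Second, on pure states it vanishes exactly on states that are product across $AB$. We also use two group-theoretic facts about $\mathcal C_n$, both already used earlier: it contains the identity, and it is finite, so the minimum is attained.

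For (a), we note that $S_{AB}(C\ket\psi)\geqslant 0$ for every $C\in\mathcal C_n$. A minimum of nonnegative numbers over a finite nonempty set is nonnegative, so $\MS(\ket\psi)\geqslant 0$. For (d), we simply insert $C=I$ into the minimization: the minimum is bounded above by any single admissible value, giving $\MS(\ket\psi)\leqslant S(\ket\psi)$.

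For (b), we again choose $C=I$. A fully product state $\ket{\phi_1}\otimes\cdots\otimes\ket{\phi_n}$ can be regrouped, for any bipartition $AB$, as $\bigl(\bigotimes_{j\in A}\ket{\phi_j}\bigr)\otimes\bigl(\bigotimes_{j\in B}\ket{\phi_j}\bigr)$. Its reduced state on $A$ is therefore pure, so $S_{AB}=0$. Combining this with (d) and (a) yields $0\leqslant \MS\leqslant 0$. Nothing in this argument refers to whether the $\ket{\phi_j}$ are stabilizer states, which establishes the remark about nonstabilizer single-qubit factors.

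For (c), we argue by contraposition using Proposition~\ref{prop:cliff_orb}. Suppose some $C$ maps $\ket\psi$ to a product state across $AB$. Then $C\ket\psi$ is an element of $\mathcal{O}_{\mathcal{C}_n}(\ket\psi)$ that is product across $AB$, and the right-to-left direction of Proposition~\ref{prop:cliff_orb} gives $\MS(\ket\psi)=0$. Hence $\MS(\ket\psi)>0$ rules out any such $C$. The phrase ``Clifford-irreducible'' is then just a restatement of this conclusion.

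The only step with real content is the faithfulness assumption on $S_{AB}$, namely that $S_{AB}=0$ if and only if the pure state is product across $AB$. Item (c) and Proposition~\ref{prop:cliff_orb} depend on it. I would state it explicitly as part of what ``standard bipartite entanglement measure'' means, and check it for the entanglement entropy: $S(\rho_A)=0$ exactly when $\rho_A$ is pure, which by the Schmidt decomposition means Schmidt rank one.
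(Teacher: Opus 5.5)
Your proposal is correct and follows the same route as the paper, which reads each item directly off the definition in Eq.~\eqref{eq:MS}, using nonnegativity of $S_{AB}$, its vanishing on product states, and the admissibility of $I\in\mathcal C_n$ in the minimization. One small correction to your closing remark: item (c) uses only the implication ``product across $AB$ $\Rightarrow$ $S_{AB}=0$'' (the right-to-left half of Proposition~\ref{prop:cliff_orb}), so it does not depend on faithfulness, which is needed only for the left-to-right half of that proposition.
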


All statements follow directly from the definition of MS~\eqref{eq:MS} and standard properties of entanglement measures.

\begin{proposition}[Product-form upper bound]\label{bd:B4}
    Let $M(\ket\psi)$ be any faithful stabilizer magic monotone. Then,
    \begin{equation}
        \label{eq:ms_upper_bd}
        \MS(\ket\psi) \leqslant S(\ket\psi) \cdot \mathbf{1} \left(M(\ket\psi) > 0\right),
    \end{equation}
    where $\mathbf{1}(\cdot)$ is the indicator function over the set of $n$-qubit nonstabilizer states. In particular, if either $M(\ket\psi) = 0$ (stabilizer state), or $S(\ket\psi) = 0$ (magic product state), it follows that $\MS(\ket\psi) \equiv 0$.
\end{proposition}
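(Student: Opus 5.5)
The bound \eqref{eq:ms_upper_bd} splits according to the value of the indicator, so the plan is a case analysis on whether $\ket\psi$ is a stabilizer state. Faithfulness of $M$ is used only here: $M(\ket\psi)=0$ if and only if $\ket\psi$ is a stabilizer state. It therefore suffices to show two things. First, $\MS(\ket\psi)\leqslant S(\ket\psi)$ always, which handles the case where the indicator equals one. Second, $\MS(\ket\psi)=0$ whenever $\ket\psi$ is a stabilizer state, which handles the case where the indicator equals zero.

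\emph{Nonstabilizer case.} If $M(\ket\psi)>0$, the indicator equals $1$. The claim then reduces to $\MS(\ket\psi)\leqslant S(\ket\psi)$, which is exactly Proposition~\ref{prop:MS_props}(d); it follows from taking $C=I$ in the minimization \eqref{eq:MS}. The same item, together with positivity from Proposition~\ref{prop:MS_props}(a), gives the clause for magic product states. If $S(\ket\psi)=0$, then $0\leqslant\MS(\ket\psi)\leqslant S(\ket\psi)=0$, so $\MS(\ket\psi)=0$.

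\emph{Stabilizer case.} If $M(\ket\psi)=0$, faithfulness gives that $\ket\psi$ is a stabilizer state, and we must show $\MS(\ket\psi)=0$. The plan is to use transitivity of the Clifford group on stabilizer states. Every $n$-qubit stabilizer state is of the form $C\ket{0}^{\otimes n}$ for some $C\in\mathcal{C}_n$, since the stabilizer group can be mapped to $\langle Z_1,\dots,Z_n\rangle$ by a Clifford. Hence $C^\dagger\ket\psi=\ket{0}^{\otimes n}$, up to a global phase that does not affect the reduced density matrix. Thus the Clifford orbit $\mathcal{O}_{\mathcal{C}_n}(\ket\psi)$ contains the product state $\ket{0}^{\otimes n}$, which factorizes across every bipartition $AB$. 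Proposition~\ref{prop:cliff_orb}, or directly Proposition~\ref{prop:MS_props}(b), then yields $\MS(\ket\psi)=0$, which matches the right-hand side $S(\ket\psi)\cdot 0=0$.

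The only step that is not immediate from earlier results is this transitivity fact. It is standard: for any maximal abelian Pauli subgroup not containing $-I$, one can construct a Clifford mapping its generators to $Z_1,\dots,Z_n$, for instance by symplectic Gram--Schmidt, or by citing the Aaronson--Gottesman canonical form~\cite{Aaronson_2004}. I would cite it rather than reprove it. I would also note that the global phase is harmless because any admissible $S_{AB}$ depends only on $\rho_A$. Combining the two cases gives \eqref{eq:ms_upper_bd} for every pure state, together with both particular consequences stated in the proposition.
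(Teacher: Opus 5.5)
Your proposal is correct and follows the paper's proof: a case split on the indicator, using Proposition~\ref{prop:MS_props}(d) when $M(\ket\psi)>0$ and the Clifford-equivalence of stabilizer states to a product state when $M(\ket\psi)=0$. Beyond the paper, you make explicit two points it leaves implicit: the transitivity of $\mathcal{C}_n$ on stabilizer states (with the harmless global phase), and the derivation of the $S(\ket\psi)=0$ clause from $0\leqslant\MS(\ket\psi)\leqslant S(\ket\psi)$.
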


\begin{proof}
    If $M(\ket\psi) = 0$, the state is stabilizer, hence Clifford-equivalent to a product state across any chosen cut; therefore
    $\MS(\ket\psi) = 0$ by definition. 
    If $M(\ket\psi) > 0$, the indicator equals $1$ and the bound reduces to $\MS(\ket\psi) \leqslant S(\ket\psi)$, which is exactly the bound proved in the entanglement upper bound.
\end{proof}

\subsection{Magical entanglement of prototypical states}

We now contrast two canonical families of states containing both entanglement and magic. These examples illustrate the motivation behind MS: in one case the entanglement is Clifford-removable, while in the other it is not. Additional examples are discussed in App.~\ref{app:typ_states}.

\paragraph*{W States.} The tripartite entangled W-state family, 
$\ket{W_n} = \frac{1}{\sqrt n}\big( \ket{100\cdots0} + \ket{010\cdots0} + \cdots + \ket{000\cdots1} \big),$ 
provides a canonical example of a state containing both entanglement and nonstabilizerness. The state is not a stabilizer state and is entangled across every nontrivial bipartition. Moreover, its Schmidt data are incompatible with a Clifford reduction to a product state across the relevant cut. Thus $\MS(\ket{W_n}) > 0$: the entanglement is Clifford-irreducible rather than Clifford-removable. This is consistent with previous work showing that $W$ states possess nonlocal magic~\cite{Odavic_2023}.

\paragraph*{Magic-infused GHZ state.} As a contrasting example, consider another nonequivalent tripartite entangled family of $n$-qubit states 
$\ket\psi = \frac{1}{\sqrt2}\left( \ket{0\cdots0} + e^{i\pi/4} \ket{1\cdots1} \right)$. 
This state can be prepared by applying a local nonstabilizer gate to one qubit and then using Clifford gates, such as CNOTs, to distribute the resulting phase coherently across the system. It follows that $\MS(\ket\psi) = 0$: applying the inverse Clifford circuit disentangles the state and leaves a nonstabilizer product state. Thus this example contains both magic and entanglement, but their relation is Clifford-removable. This contrasts with the $W$-state case, where the entanglement cannot be removed by Clifford processing.

\subsection{\texorpdfstring{$\nu$}{nu}-compressible decomposition and regime classification}

We next use the $\nu$-compressible representation theorem of Ref.~\cite{Gu_2025}. Let the stabilizer nullity of a state be $\nu(\ket\psi) := 2n - \lg(|\mathrm{StP}(\ket\psi)|),$ where $\mathrm{StP}(\ket\psi) := \mathrm{Stab}(\ket\psi) \cap \mathcal P_n $ is the stabilizer Pauli group of $\ket\psi$~\cite{Beverland_2020, Jiang_2023}. If $\ket\psi = U \ket0^{\otimes n}$ and $\nu(\ket\psi) = \nu$, then the theorem states that, for some $\nu/2 \leqslant \ell \leqslant \nu$,
\begin{equation*}
    U = C_2 \left( V_\ell \otimes I_{n - \ell} \right) C_1,
\end{equation*} 
where $C_1, C_2 \in \mathcal C_n$, $V_\ell$ is a nonstabilizer $\ell$-qubit gate, and $I_{n - \ell}$ is the identity on the remaining qubits. We refer to this factorization as the $\nu$-compressible decomposition of $\ket\psi$. It measures how complex a single nonstabilizer block must be in order to generate the nonstabilizerness of the state when the circuit is compressed into one such layer. The parameter $\ell$ encodes this complexity and is bounded by the stabilizer nullity $\nu$. For the magic-infused GHZ state, one may take $V_\ell^{\rm GHZ} = T \otimes I_{\ell - 1}$. In contrast, for $W$ states one needs a two-qubit nonstabilizer block, for example $V_\ell^W = \mathrm{CH} \otimes I_{\ell - 2}$, where $\mathrm{CH}$ is the controlled-Hadamard gate. This decomposition naturally provides two fundamentally different regimes of the magic-entanglement relation:

\begin{definition}[$T$-magic]
    A nonstabilizer state $\ket\psi$ belongs to the \emph{$T$-magic class} if its $\nu$-compressible decomposition includes exclusively local nonstabilizer gates, i.e., if $\ket\psi = U \ket0^{\otimes n}$ then,
    \begin{equation}
        \label{eq:CTC}
        U = C_2 \Biggl( \bigotimes_{j = 1}^{\ell} U_j \otimes I_{n-\ell} \Biggr) C_1,
    \end{equation}
    where the active local gates are non-Clifford, $U_j \in \mathrm{U}(2) \setminus \mathcal C_1$. We denote by $\mathcal{M}_T$ the set of $T$-magic states. The corresponding general circuit architecture is shown in Fig.~\ref{fig:CTC}.
    \begin{figure}
        \centering
        \includegraphics*[height=0.3\linewidth]{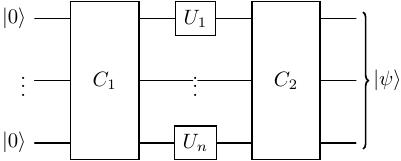}
        \caption{General CTC architecture for $T$-magic states; see Notation~\ref{not:ctc}. A layer of local nonstabilizer gates is sandwiched between Clifford circuits, defining the $1$-layer architecture family $\mathfrak{M}(1)$.}
        \label{fig:CTC}
    \end{figure}
\end{definition}

\begin{definition}[$W$-magic]
    A nonstabilizer state $\ket\psi$ belongs to the \emph{$W$-magic class} if its $\nu$-compressible decomposition includes at least a two-qubit nonstabilizer operator. We denote by $\mathcal{M}_W$ the set of $W$-magic states.
\end{definition}

Thus, within the nonstabilizer sector, $\mathcal M_T$ and $\mathcal M_W$ define complementary regimes according to whether the nonstabilizer block is local or not. Intuitively, $W$-magic represents the regime in which nonstabilizerness is tied to entanglement in a way that cannot be reproduced by a single layer of local nonstabilizer gates surrounded by Clifford operations. This does not mean that $W$-magic states cannot be generated using local nonstabilizer gates; rather, they are naturally probed by interleaving such layers with nonlocal Clifford evolution.

The most commonly used universal set of gates is the \enquote{Clifford~$+~T$} set, where $T = \text{diag}(1, e^{i\pi/4})$. More generally, it is known that $\mathcal{C}_n$ is maximal, in the sense that $\mathcal{C}_n \cup \{U\}$, with $U \in \mathrm{U}(2) \setminus \mathcal{C}_n$, yields a universal gate set~\cite{Nielsen_2011, Galindo_2026}. This motivates asking how $W$-magic can arise when the only nonstabilizer resources allowed at each layer are local gates.

\begin{notation}\label{not:ctc}
    We call a state a \enquote{CTC} state, or its matching circuit a \enquote{CTC} architecture, if it admits the decomposition in Eq.~\eqref{eq:CTC}. More generally, we consider reduced architecture families with $k$ local nonstabilizer layers interleaved with Clifford evolution. Formally, we define a nontrivial local nonstabilizer layer on $n$ qubits as
    \begin{equation}
        \breve T := \bigotimes_{j = 1}^n U_j^{p_j}, \quad 
        p_j \in \{ 0,1 \}.
    \end{equation}
    where $U_j \in \mathrm{U}(2) \setminus \mathcal C_1$ and at least one $p_j$ is nonzero. Thus, a $1$-layer architecture has the form CTC, a $2$-layer architecture has the form CTCTC, and so on.

    We use a reduced convention in which local nonstabilizer layers separated only by identity or local Clifford evolution are combined into a single layer. Hence, for $k\geq2$, intermediate Clifford stages contain nonlocal Clifford evolution. The resulting reduced architecture family is denoted by $\mathfrak M(k)$.
\end{notation}

With this convention, $\mathfrak M(0)$ is the Clifford/stabilizer family, $\mathfrak M(1)$ is the one-layer architecture associated with the $T$-magic side, and $\mathfrak M(k\geq2)$ repeatedly interleaves local nonstabilizer layers with nonlocal Clifford evolution. These reduced architecture families provide an architecture-level setting in which $W$-magic-like, Clifford-irreducible behavior can emerge, but are not intrinsic, disjoint state classes.

Although the families $\mathfrak M(k)$ are not disjoint as state sets, Sec.~\ref{sec:num_res} shows a clear statistical separation in their MS behavior. In particular, $\mathfrak M(2)$ appears as a transitional regime between the one-layer $T$-magic side and the more Haar-like behavior of $\mathfrak M(k\geqslant3)$. This makes the reduced layer structure a useful numerical probe of the $T$- to $W$-magic crossover. We now illustrate the entanglement-magic relation in the one-layer family $\mathfrak M(1)$ in terms of MS.

\begin{proposition}\label{prop:msk}
    Let $\ket\psi$ be an $n$-qubit state generated by a $1$-layer architecture, with CTC decomposition $\ket\psi = C_2 (\bigotimes_{j=1}^k V_j \otimes I_{n - k}) C_1 \ket0^{\otimes n}$, and let $S$ be the entanglement entropy. Suppose that each active local gate has the form $V_j = e^{i\theta_j P_j}$, where $P_j \in \mathcal{P}_1$, and is nontrivial and non-Clifford. Then, for any bipartition $AB$, $\MS(\ket\psi) \leqslant k$.
\end{proposition}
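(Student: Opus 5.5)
Using Clifford invariance, I will replace $\ket\psi$ by $C_2^\dagger\ket\psi$. Then it suffices to exhibit one Clifford $C$ with $S(C C_2^\dagger\ket\psi)\leqslant k$. The natural choice is $C=I$, applied to the state
\begin{equation*}
\ket{\chi} := \Bigl(\bigotimes_{j=1}^k V_j \otimes I_{n-k}\Bigr)\ket{s}, \qquad \ket{s}:=C_1\ket0^{\otimes n},
\end{equation*}
where $\ket s$ is a stabilizer state. Two subtleties need care. First, $\ket s$ itself may carry entanglement across $AB$. Second, the active qubits may be split across the cut. The goal is therefore to bound the entanglement of $\ket\chi$ directly, but after a further Clifford simplification.

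The key step is an expansion of each gate. Since $P_j^2=I$, we have $V_j=\cos\theta_j\, I+i\sin\theta_j\, P_j$. Expanding the tensor product gives
\begin{equation*}
\ket\chi=\sum_{x\in\{0,1\}^k} c_x\, P^{x}\ket s, \qquad P^x:=\bigotimes_j P_j^{x_j},
\end{equation*}
with scalar coefficients $c_x$. Each $P^x\ket s$ is a stabilizer state in the Clifford orbit of $\ket s$. Moreover, these states all lie in the span of the $2^k$ states $P^x\ket s$, and this span is the orbit of $\ket s$ under the abelian-up-to-phase group generated by $P_1,\dots,P_k$ (the $P_j$ act on distinct qubits, so they commute).

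Next I will choose a Clifford $D$, built from the stabilizer formalism, that maps $\ket s$ to $\ket0^{\otimes n}$ and, at the same time, conjugates each $P_j$ to a Pauli acting on a fixed small register. Concretely, each $D P_j D^\dagger$ should be either a phase times a stabilizer of $\ket 0^{\otimes n}$ or an $X$-type operator supported on at most $k$ qubits. This is possible because the group $\langle P_1,\dots,P_k\rangle$ has rank at most $k$ modulo $\mathrm{Stab}(\ket s)$. One can then pick a symplectic basis in which the nontrivial cosets are represented by $X_{q_1},\dots,X_{q_r}$ with $r\leqslant k$, on chosen qubits $q_i$. After applying $D$, the state $D\ket\chi$ equals $\ket{\phi}_{q_1\dots q_r}\otimes\ket0^{\otimes(n-r)}$. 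Finally, a qubit permutation, which is Clifford, places all $r$ qubits in whichever side they fit, or splits them across $AB$. The entanglement entropy across any cut is then bounded by the number of qubits carrying the nontrivial factor on the smaller side. That number is at most $r\leqslant k$, and we are measuring entropy in bits, which gives $\MS(\ket\psi)\leqslant k$.

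The main obstacle I expect is the symplectic step. It requires showing that a Clifford exists which simultaneously sends $\ket s$ to $\ket0^{\otimes n}$ and maps the cosets of $P_1,\dots,P_k$ modulo the stabilizer group into Paulis supported on only $r\leqslant k$ qubits. I would first try to handle this by Gaussian elimination over $\mathbb F_2^{2n}$. Take the stabilizer generators of $\ket s$ together with the $P_j$. Complete the independent part of the $P_j$ (those that anticommute with some stabilizer) to destabilizers, and map stabilizer/destabilizer pairs to the standard $Z_q/X_q$ pairs. If that turns out to be messy, a cruder but sufficient fallback is to bound the Schmidt rank directly. The expansion shows that $\ket\chi$ lies in a space spanned by at most $2^k$ states of the form $P^x\ket s$. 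After $D$, the states $DP^x\ket s$ are computational basis states differing only on $r$ qubits, so the Schmidt rank is at most $2^r$ across any cut, and the entropy is at most $r\leqslant k$.
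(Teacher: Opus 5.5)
Your proof is correct, and your fallback is the paper's proof. The paper takes $C=C_1^\dagger C_2^\dagger$ and expands $\bigotimes_j V_j=\sum_R c_R P_R$ exactly as you do. It then uses that each $C_1^\dagger P_R C_1$ is a Pauli, so it sends $\ket0^{\otimes n}$ to a computational-basis state up to phase, and bounds the Schmidt rank by the number ($\le 2^k$) of these product vectors.

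The fallback needs no symplectic step. Any Clifford $D$ with $D\ket s=\ket0^{\otimes n}$, e.g.\ $D=C_1^\dagger$, already turns each $DP^x\ket s$ into a basis state $\ket{\bigoplus_j x_j a_j}$, where $a_j\in\mathbb F_2^n$ is the $X$-part of $DP_jD^\dagger$. So at most $2^r$ distinct basis states occur, with $r=\dim\mathrm{span}\{a_j\}\le k$.

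Your main route, compressing to $\ket\phi_{q_1\dots q_r}\otimes\ket0^{\otimes(n-r)}$, is a genuine refinement. Its ``concrete'' form cannot always be met. Take $\ket s=\ket{\Phi^+}_{12}\otimes\ket0^{\otimes(n-2)}$ with $P_1=X_1$, $P_2=X_2$. Neither $P_j$ stabilizes $\ket s$, so both $DP_jD^\dagger$ would have to be $X$-type. Yet their product is the image of $X_1X_2\in\mathrm{Stab}(\ket s)$, hence a $Z$-type stabilizer of $\ket0^{\otimes n}$, which is impossible. The coset version you state next is all you need. It is realized without destabilizer completion by $D=GC_1^\dagger$, where $G$ is a CNOT circuit implementing an invertible $\mathbb F_2$-linear map sending $\mathrm{span}\{a_j\}$ onto $\mathrm{span}\{e_{q_1},\dots,e_{q_r}\}$.

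This route proves strictly more than the proposition. Using SWAPs, you can move the $r$-qubit register onto the larger side of the cut. That gives $\MS_{AB}(\ket\psi)\le\max\{0,\,r-\max(|A|,|B|)\}$, so for instance $\MS_{AB}(\ket\psi)=0$ whenever $k\le\max(|A|,|B|)$. The paper's support-counting argument only yields $\MS\le k$.
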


\begin{proof}
    The proof can be found in App.~\ref{app:msk}.
\end{proof}

The form $V_j = e^{i\theta_j P_j}$ is not as restrictive as it may seem: it includes arbitrary non-Clifford rotations about the three Pauli axes on the Bloch sphere, and in particular includes the $T$ gate. Therefore, any $1$-layer architecture using only Clifford gates and active $T$ gates satisfies the bound, with $k$ the number of active $T$ gates.

\section{MS canonical form and orbit invariants}
\label{sec:MS_form}

Since the Clifford group is finite, the minimization in the definition of MS~\eqref{eq:MS} is always attained by at least one Clifford operator. Thus $\MS_{AB}$ is a well-defined quantity for any bipartite entanglement measure $S_{AB}$ defined on pure states. We denote by $\spec_{AB}(\ket\psi)$ the ordinary Schmidt spectrum, or entanglement spectrum, of $\ket\psi$. In what follows, we fix a bipartition $AB$ of an $n$-qubit system, unless stated otherwise.

\begin{definition}[MS canonical form]
    Given a state $\ket\psi \in ( \mathbb{C}^2 )^{\otimes n}$, let $C^* \in \mathcal{C}_n$ satisfy $S_{AB}(C^* \ket\psi) = \min_{C \in \mathcal{C}_n} S_{AB}(C \ket\psi) = \MS_{AB}(\ket\psi)$, and define $\ket{\psi^*} := C^*\ket\psi$. Now let $U_A$ and $U_B$ be local unitaries acting on each partition of $AB$, such that $(U_A \otimes U_B)\ket{\psi^*} =: \ket{\psi_{\MS}}$ is in Schmidt form. Explicitly,
    \begin{equation}
        \ket{\psi_{\MS}} = \sum_{k = 1}^{r_{\MS}} \sqrt{\lambda_k} \ket k_A \otimes \ket k_B,
        \quad
        \bm{\lambda}(\ket\psi) := \left( \lambda_k \right)_{k=1}^{r_{\MS}}.
    \end{equation}
    We call $\ket{\psi_{\MS}}$ an MS canonical form of $\ket\psi$ associated with the minimizer $C^*$, $\bm{\lambda}(\ket\psi)$ the corresponding MS spectrum, and $r_{\MS}$ its MS rank. If the minimizer is not unique, the MS canonical form and the associated spectrum are understood with respect to that fixed minimizer.
\end{definition}

The scalar quantity $\MS_{AB}(\ket\psi)$ is independent of the choice of minimizer, but the spectrum of a minimizing representative need not be. To keep track of this possible nonuniqueness, one may regard the invariant fine-grained object as the set of minimizing spectra
\begin{multline}
    \Lambda_{AB}(\ket\psi) := \left\{ \spec_{AB}(C\ket\psi): C \in \mathcal C_n,\ \right. \\ 
    \left.S_{AB}(C\ket\psi) = \MS_{AB}(\ket\psi) \right\}.
\end{multline}
When this set contains a single spectrum, or when a minimizer has been fixed by convention, we write the corresponding representative spectrum as $\bm{\lambda}(\ket\psi)$.

The MS spectrum, MS rank, and MS canonical form provide a set of descriptors for analyzing the resource-theoretic structure of quantum states. These MS-derived quantities capture how nonstabilizer structure remains intertwined with entanglement after the Clifford-compatible part has been minimized. Strictly speaking, when the Clifford minimizer is not unique, the single spectrum $\bm{\lambda}(\ket\psi)$ is a chosen representative, while $\Lambda(\ket\psi)$ is the invariant spectral object associated with $\MS(\ket\psi)$. Thus, the scalar MS value gives a coarse orbit-level invariant, whereas the minimizing spectra retain finer Clifford-irreducible entanglement structure.

With this notation, we reserve $\bm{\lambda}(\ket\psi)$ for the MS spectrum of a chosen MS canonical form of $\ket\psi$. In general, $\spec(\ket\psi) \neq \bm{\lambda}(\ket\psi)$ because $\bm{\lambda}(\ket\psi)$ is computed after minimizing entanglement over the Clifford orbit. 
However, for the chosen minimizer $C^*$ we have $\spec(\ket{\psi^*}) = \bm{\lambda}(\ket\psi)$, since $\ket{\psi^*}$ only differs from its canonical form by local unitaries. Furthermore, note that by definition $\spec(\ket{\psi_{\MS}}) = \bm{\lambda}(\ket\psi)$, but $\spec(\ket{\psi_{\MS}}) \neq \bm{\lambda}(\ket{\psi_{\MS}})$ since the latter would require minimizing again over the Clifford orbit of $\ket{\psi_{\MS}}$, $\mathcal O_{\mathcal C_n}(\ket{\psi_{\MS}})$. We formalize this by resorting to two types of orbits in state space; namely, the Clifford orbit~\eqref{eq:Cliff_orb} and the (infinite) Schmidt orbit~\cite{Iannotti_2025}:
\[
    \mathcal{O}_S(\ket\psi) := \{ (U_A \otimes U_B)\ket\psi : 
    U_{A(B)} \in \mathrm{U}(2^{|A(B)|})
    \}.
\]

\begin{lemma}[Orbit invariants]\label{lemma:orbits}
    For an $n$-qubit state $\ket\psi$ take $C^*$ to be \emph{one of} the minimizers of entanglement over the Clifford orbit of $\ket\psi$, as in the previous definition. Then,
    %
    \begin{gather*}
        \ket\phi \in \mathcal{O}_{\mathcal C_n}(\ket\psi) \Longrightarrow 
        \begin{cases}
            \MS_{AB}(\ket\psi) = \MS_{AB}(\ket\phi),\\
            \Lambda_{AB}(\ket\psi) = \Lambda_{AB}(\ket\phi).
        \end{cases}
        \\
        \ket\phi \in \mathcal{O}_{S}(C^*\ket\psi) \Longrightarrow  
        \begin{cases}
            \MS_{AB}(\ket\psi) = S_{AB}(\ket\phi),\\
            \bm{\lambda}(\ket\psi) = \spec_{AB}(\ket\phi).
        \end{cases}
    \end{gather*}
\end{lemma}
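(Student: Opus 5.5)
The plan is to prove the two implications separately. Each one reduces to a reindexing of a minimization over the Clifford group, or to the invariance of Schmidt data under local unitaries.

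\emph{First implication (Clifford orbit).} Suppose $\ket\phi \in \mathcal O_{\mathcal C_n}(\ket\psi)$, and write $\ket\phi = \tilde C\ket\psi$ with $\tilde C \in \mathcal C_n$. The equality $\MS_{AB}(\ket\phi)=\MS_{AB}(\ket\psi)$ is exactly the Clifford invariance proposition. For the spectral sets, the key step is that right multiplication by $\tilde C$, namely $C \mapsto C\tilde C$, is a bijection of $\mathcal C_n$ onto itself, because $\mathcal C_n$ is a group. Hence
\begin{equation*}
\{C\ket\phi : C\in\mathcal C_n\} = \{C'\ket\psi : C'\in\mathcal C_n\},
\end{equation*}
so the two Clifford orbits coincide as sets. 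The defining condition of $\Lambda_{AB}$ is $S_{AB}(C\ket\phi)=\MS_{AB}(\ket\phi)$, and since the MS values agree, this condition is the same as $S_{AB}(C'\ket\psi)=\MS_{AB}(\ket\psi)$. Therefore $\Lambda_{AB}(\ket\phi)$ and $\Lambda_{AB}(\ket\psi)$ are images of the same set of vectors under the same map $\spec_{AB}$, and so they are equal.

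\emph{Second implication (Schmidt orbit).} Suppose $\ket\phi = (U_A\otimes U_B)C^*\ket\psi$. I will show that the reduced density matrix transforms as
\begin{equation*}
\rho_A(\ket\phi) = U_A\,\rho_A(C^*\ket\psi)\,U_A^\dagger,
\end{equation*}
because $U_B$ cancels under the partial trace over $B$. Consequently the nonzero eigenvalues, which form the Schmidt spectrum, are unchanged: $\spec_{AB}(\ket\phi)=\spec_{AB}(C^*\ket\psi)$. Since any standard pure-state entanglement measure is a function of the Schmidt spectrum (for instance the entropy), this also gives $S_{AB}(\ket\phi)=S_{AB}(C^*\ket\psi)=\MS_{AB}(\ket\psi)$, where the last equality holds by the choice of $C^*$ as a minimizer. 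Finally, $\bm\lambda(\ket\psi)$ is by definition the spectrum of $\ket{\psi_{\MS}}=(U_A^*\otimes U_B^*)C^*\ket\psi$, which lies in the same Schmidt orbit, so $\bm\lambda(\ket\psi)=\spec_{AB}(C^*\ket\psi)=\spec_{AB}(\ket\phi)$.

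\emph{Main obstacle.} The only delicate point is bookkeeping about the possible nonuniqueness of the minimizer. The second statement must be read relative to the fixed $C^*$ that defines $\bm\lambda(\ket\psi)$, which is why the lemma says \enquote{one of} the minimizers. The first statement avoids this dependence by using the full set $\Lambda_{AB}$ rather than $\bm\lambda$. I will also state explicitly the standing assumption that $S_{AB}$ depends only on the Schmidt spectrum, so that it is invariant under local unitaries.
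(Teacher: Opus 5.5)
Your proposal is correct and follows the paper's own proof. For $\Lambda_{AB}$ you reindex the Clifford minimization by right multiplication, so the minimizing vectors of $\ket\phi$ and $\ket\psi$ coincide; for the second implication you use local-unitary invariance of the Schmidt spectrum. You are only more explicit than the paper, for instance in writing $\rho_A(\ket\phi)=U_A\,\rho_A(C^*\ket\psi)\,U_A^\dagger$ and in noting that $\bm{\lambda}(\ket\psi)$ is tied to the fixed minimizer $C^*$.
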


\begin{proof}

    The first part follows from Clifford invariance of the minimization. If $\ket\phi = C_0 \ket\psi$, then minimizing $S_{AB}(C\ket\phi)$ over $C \in \mathcal C_n$ is equivalent to minimizing $S_{AB}(CC_0\ket\psi)$ over the same Clifford group. Hence $\MS_{AB}(\ket\phi) = \MS_{AB}(\ket\psi)$. Moreover, the Clifford minimizers of $\ket\phi$ are obtained from those of $\ket\psi$ by right multiplication with $C_0^\dagger$, so the set of minimizing spectra is unchanged: $\Lambda_{AB}(\ket\phi) = \Lambda_{AB}(\ket\psi).$

    Entanglement is invariant over the Schmidt orbit of any state. In particular, since $S_{AB}(\ket{\psi_{\MS}}) = S_{AB}(\ket{\psi^*}) = \MS_{AB}(\ket\psi)$, the same entropy and spectrum hold for any $\ket\phi \in \mathcal O_S(C^*\ket\psi)$.
\end{proof}

Note that MS is not, in general, invariant under the Schmidt orbit, even though ordinary entanglement is. This highlights a key difference to magical entanglement, which depends on a second resource (non-Cliffordness) as well as on entanglement.

\begin{proposition}[Orbit characterization via the canonical representative]\label{prop:orbit}
    Let $C_\psi$ and $C_\phi$ be chosen minimizers for $\ket\psi$ and $\ket\phi$, respectively. Then, 
    \begin{equation*}
    \begin{aligned}
        \bm{\lambda}(\ket\psi) = \bm{\lambda}(\ket\phi) \ \Longleftrightarrow \ &\exists\, U_{A(B)} \in \mathrm{U}(2^{|A(B)|}) \\ 
        &: \ket\psi = C_\psi^\dagger (U_A \otimes U_B) C_\phi \ket\phi.
    \end{aligned}
    \end{equation*}
\end{proposition}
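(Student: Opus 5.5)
The plan is to reduce the statement to the classical fact that two bipartite pure states have the same Schmidt spectrum if and only if they are related by a local unitary $U_A \otimes U_B$. Write $\ket{\psi^*} := C_\psi\ket\psi$ and $\ket{\phi^*} := C_\phi\ket\phi$. By the definition of the MS canonical form and the second part of Lemma~\ref{lemma:orbits}, we have $\bm{\lambda}(\ket\psi) = \spec_{AB}(\ket{\psi^*})$ and $\bm{\lambda}(\ket\phi) = \spec_{AB}(\ket{\phi^*})$. Here the spectra are understood as the nonzero Schmidt coefficients in nonincreasing order, so that the MS ranks agree whenever the spectra do. The claimed equivalence then reads: $\spec_{AB}(\ket{\psi^*}) = \spec_{AB}(\ket{\phi^*})$ if and only if $\ket{\psi^*} = (U_A\otimes U_B)\ket{\phi^*}$ for some local unitaries. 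Multiplying on the left by $C_\psi^\dagger$ turns the latter into $\ket\psi = C_\psi^\dagger (U_A\otimes U_B) C_\phi\ket\phi$.

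For ($\Leftarrow$), I assume $\ket\psi = C_\psi^\dagger (U_A\otimes U_B) C_\phi \ket\phi$. Applying $C_\psi$ gives $\ket{\psi^*} = (U_A\otimes U_B)\ket{\phi^*}$, so $\ket{\psi^*} \in \mathcal O_S(\ket{\phi^*})$. The reduced density matrix on $A$ transforms by the unitary conjugation $\rho_A \mapsto U_A \rho_A U_A^\dagger$, so its eigenvalues are unchanged. Hence $\spec_{AB}(\ket{\psi^*}) = \spec_{AB}(\ket{\phi^*})$, which is $\bm{\lambda}(\ket\psi) = \bm{\lambda}(\ket\phi)$. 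This direction is exactly the Schmidt-orbit invariance already used in Lemma~\ref{lemma:orbits}.

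For ($\Rightarrow$), I take Schmidt decompositions
\begin{equation*}
\ket{\psi^*} = \sum_{k=1}^{r} \sqrt{\lambda_k}\,\ket{a_k}_A\ket{b_k}_B, \qquad \ket{\phi^*} = \sum_{k=1}^{r}\sqrt{\lambda_k}\,\ket{a'_k}_A\ket{b'_k}_B,
\end{equation*}
which share coefficients and rank by hypothesis. I then extend the orthonormal sets $\{\ket{a_k}\}$, $\{\ket{a'_k}\}$ to full bases of $\mathbb C^{2^{|A|}}$, and do the same for $B$. Next I define $U_A\ket{a'_k} = \ket{a_k}$ for all basis vectors, and $U_B$ analogously; these maps are unitary because they send orthonormal bases to orthonormal bases. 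Then $(U_A\otimes U_B)\ket{\phi^*} = \ket{\psi^*}$, and conjugating back by $C_\psi^\dagger$ gives the claim.

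The only delicate point, which I expect to be the main obstacle, is well-definedness when minimizers or Schmidt vectors are not unique. Degenerate Schmidt coefficients make the Schmidt vectors ambiguous, but the construction above works for any choice, since a local unitary only needs to exist. Global phases are absorbed into $U_A$. Nonuniqueness of the Clifford minimizer is handled by the hypothesis that $C_\psi$ and $C_\phi$ are fixed. I would remark that the equivalence is stated relative to these fixed choices, not to the invariant set $\Lambda_{AB}$.
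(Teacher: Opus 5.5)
Your proposal is correct and follows essentially the same route as the paper: both pass to the Clifford-reduced representatives $C_\psi\ket\psi$ and $C_\phi\ket\phi$, use that equal Schmidt spectra is equivalent to local-unitary equivalence, and then undo $C_\psi$. The only difference is that you construct $U_A\otimes U_B$ explicitly from matched Schmidt bases and fix the convention of comparing nonzero coefficients in nonincreasing order, whereas the paper simply cites local-unitary equivalence as a standard fact.
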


\begin{proof}
    $(\Rightarrow)$ If the two states have identical MS spectra, by definition $\spec(\ket{\psi^*}) = \spec(\ket{\phi^*})$; thus they are LU equivalent by some local operator $U_A \otimes U_B$. Therefore,
    %
    \begin{equation*}
    \begin{aligned}
        \ket{\psi^*} &= (U_A \otimes U_B) \ket{\phi^*} \\ 
        \iff C_\psi \ket\psi &= (U_A \otimes U_B) C_\phi \ket\phi \\ 
        \iff \ket\psi &= C_\psi^\dagger (U_A \otimes U_B) C_\phi \ket\phi.
    \end{aligned}
    \end{equation*}
    $(\Leftarrow)$ Following the same steps as above allows to show that $\ket{\psi^*} \overset{\mathrm{LU}}{\displaystyle\sim} \ket{\phi^*}$ and thus they have the same spectra, which is also the MS spectrum of $|\psi\rangle$ and $\ket\phi$.
\end{proof}

This result shows that two states share the same MS spectrum only if their chosen Clifford-reduced representatives are related by a bipartition-local unitary. Equivalently, the original states are related by an operator of the form $C_\psi^\dagger (U_A \otimes U_B) C_\phi$. This is weaker than requiring a product of single-qubit unitaries, but it motivates the following specialization to $\mathfrak{M}(1)$-type transformations.

\begin{corollary} 
    For any fixed bipartition $AB$, consider two $n$-qubit states $\ket\psi$ and $\ket\phi$. If 
    \[ 
        \ket\psi 
        = 
        C_\psi^\dagger \left(U_1 \otimes \cdots \otimes U_n \right) C_\phi \ket\phi = C_\psi^\dagger \breve T C_\phi \ket\phi,
    \] 
    where the $U_j$, $(j = 1, \dots, n)$ are single-qubit unitaries and $C_\psi,\ C_\phi$ are chosen minimizers for that bipartition, then 
    \[
        \bm{\lambda}(\ket\psi) = \bm{\lambda}(\ket\phi). 
    \]
\end{corollary}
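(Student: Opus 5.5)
The corollary should follow directly from the ($\Leftarrow$) direction of Proposition~\ref{prop:orbit}, once we show that the layer $\breve T = U_1 \otimes \cdots \otimes U_n$ is itself a bipartition-local unitary. The plan has two steps, and no property of the $U_j$ beyond unitarity is needed; in particular it is irrelevant whether they are Clifford or not.

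\textbf{Step 1 (factor the layer).} Fix the cut $AB$ and split the qubit labels as $A \sqcup B$. Since the tensor factors of $\breve T$ act on distinct qubits, they can be regrouped, up to the canonical reordering of tensor factors, as
\begin{equation*}
    \breve T = \Bigl(\bigotimes_{j\in A} U_j\Bigr) \otimes \Bigl(\bigotimes_{j\in B} U_j\Bigr) =: U_A \otimes U_B .
\end{equation*}
Here $U_A \in \mathrm{U}(2^{|A|})$ and $U_B \in \mathrm{U}(2^{|B|})$. This requires only that tensor products of unitaries are unitary. Note that the cut need not be contiguous, so the regrouping is a permutation of tensor factors. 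That permutation is compatible with the same permutation used to define $S_{AB}$, so there is nothing further to check.

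\textbf{Step 2 (apply the orbit characterization).} The hypothesis now reads $\ket\psi = C_\psi^\dagger (U_A\otimes U_B) C_\phi\ket\phi$. Multiplying on the left by $C_\psi$ gives
\begin{equation*}
    \ket{\psi^*} = C_\psi\ket\psi = (U_A\otimes U_B)\, C_\phi\ket\phi = (U_A\otimes U_B)\ket{\phi^*},
\end{equation*}
so $\ket{\psi^*} \in \mathcal O_S(\ket{\phi^*})$. By invariance of the Schmidt spectrum under local unitaries, which is the second part of Lemma~\ref{lemma:orbits}, we get $\spec_{AB}(\ket{\psi^*}) = \spec_{AB}(\ket{\phi^*})$. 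By definition of the MS spectrum relative to the chosen minimizers, this common spectrum equals both $\bm\lambda(\ket\psi)$ and $\bm\lambda(\ket\phi)$. Equivalently, one can cite Proposition~\ref{prop:orbit} ($\Leftarrow$) directly.

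The only point needing care is bookkeeping rather than a genuine obstacle. The statement fixes $C_\psi$ and $C_\phi$ as the \emph{chosen} minimizers, so $\bm\lambda$ is evaluated with respect to exactly the representatives $\ket{\psi^*}=C_\psi\ket\psi$ and $\ket{\phi^*}=C_\phi\ket\phi$ that appear in the hypothesis. Under this convention the equality of representative spectra follows as above, with no non-uniqueness issue. If instead $\bm\lambda$ were taken with respect to different minimizers, one would only obtain $\Lambda_{AB}(\ket\psi)\cap\Lambda_{AB}(\ket\phi)\neq\emptyset$, and I would remark on this.
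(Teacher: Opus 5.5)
Your proposal is correct and takes the same route the paper intends. The paper states this corollary without a separate proof, as an immediate specialization of the $(\Leftarrow)$ direction of Proposition~\ref{prop:orbit}: regroup $U_1\otimes\cdots\otimes U_n$ into a bipartition-local $U_A\otimes U_B$, so that $C_\psi\ket\psi=(U_A\otimes U_B)C_\phi\ket\phi$ and the chosen representatives share a Schmidt spectrum. Your additional remarks, that only unitarity of the $U_j$ is used and that with other minimizers one would only get $\Lambda_{AB}(\ket\psi)\cap\Lambda_{AB}(\ket\phi)\neq\emptyset$, are also accurate.
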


Note that, in general, we do not get the converse implication. Although equality of the MS spectra implies bipartition-local unitary equivalence of the chosen minimizing representatives for each bipartition, it does not imply that the same local operators realize the equivalence simultaneously for all cuts. Moreover, it is not even guaranteed that the transformation decomposes into a tensor product of single-qubit unitaries.

Consider a class of $T$-magic states with the simple decomposition: $\ket\psi = C ( \bigotimes ^n_{j = 1} U_j ) \ket{0^n}$, where $C \in \mathcal{C}_n$ and $U_j \in \mathrm U(2)$ for all $j$. Then, there exists $\ket{\psi'} = C' \ket\psi \in \mathcal{O}_{\mathcal{C}_n}(\ket\psi)$ (by taking $C' = C^\dagger$) such that $\ket{\psi'}$ is a product state, and hence $\MS( \ket\psi ) = 0$ and $\bm{\lambda}(\ket\psi) = (1,0,\dots,0)$. However, this does not generally hold for arbitrary $T$-magic states. For $T$-magic states with a CTC architecture, $\ket\psi = C_2 (\bigotimes_{j = 1}^n U_j ) C_1 \ket{0^n}$, and define $\ket{\psi'} = C_2^\dagger \ket\psi \in \mathcal{O}_{\mathcal{C}_n}(\ket\psi)$. The state $\ket{\psi'}$ is locally equivalent (via $\bigotimes_{j = 1}^n U_j^\dagger$) to a stabilizer state, and thus $\spec(\ket{\psi'})$ is flat. In this case, we cannot conclude that $\MS( \ket\psi ) = S(\ket{\psi'})$ or that $\bm{\lambda}(\ket\psi) = \spec(\ket{\psi'})$. For example, suppose there exists another state in the Clifford orbit of $\ket\psi$ with the same Schmidt rank as $\ket{\psi'}$. For a fixed Schmidt rank, the flat spectrum maximizes the entanglement entropy. Therefore, if another state in the same Clifford orbit has the same Schmidt rank but a less flat spectrum, $\ket{\psi'}$ cannot be the entropy minimizer.

\section{Numerical results for the MS regimes}
\label{sec:num_res}

In the previous section we introduced the $T$-magic and $W$-magic regimes and described their structural distinction in terms of MS. We now study how this distinction appears numerically in ensembles sampled from the reduced architecture families $\mathfrak{M}(k)$, with particular emphasis on the finite-size crossover from $T$-magic to $W$-magic. The main difficulty is that this behavior involves two coupled resources, entanglement and nonstabilizerness, so exact classical simulations necessarily restrict the accessible system sizes.

We choose the entanglement entropy as the measure $S_{AB}$ in the MS definition~\eqref{eq:MS}. We numerically estimate MS for states sampled from the reduced architecture families $\mathfrak{M}(1)$, $\mathfrak{M}(2)$, and $\mathfrak{M}(3)$. We implement multi-start simulated annealing~\cite{Kirkpatrick_1983} over the Clifford orbit $\mathcal{O}_{\mathcal{C}_n}(\ket\psi)$, initializing independent trajectories from random Clifford operators and minimizing the entanglement entropy. To improve exploration of the rugged landscape, we incorporate parallel tempering, allowing replicas at different temperatures to exchange configurations and escape local minima~\cite{Cicirello_2017}. The best configurations are further refined by a greedy local refinement step. The minimal value obtained across all runs is our numerical estimate of MS. More details can be found in App.~\ref{app:algo}.

Let $N$ be the sample size and $n$ the number of qubits. We calculated the MS values for $N = 100$ states sampled from $\mathfrak{M}(1)$ and $N = 50$ states sampled from $\mathfrak{M}(2)$, with $n \in [4, 12]$. For $\mathfrak{M}(3)$, we computed $N=50$ states up to $n=10$, reducing the sample size to $N = 20$ and $N = 10$ for $n = 11$ and $n = 12$, respectively, due to almost zero variance. This gives a total of $N_T = 1730$ simulated states. We use the balanced bipartition of size $\lfloor n/2\rfloor : \lceil n/2\rceil$. We do not report results for $\mathfrak{M}(k \geqslant 4)$, since the approach to Haar-random behavior is already visible for $k = 3$ at the accessible system sizes.

\begin{figure}
    \centering
    \includegraphics*[width=.8\linewidth]{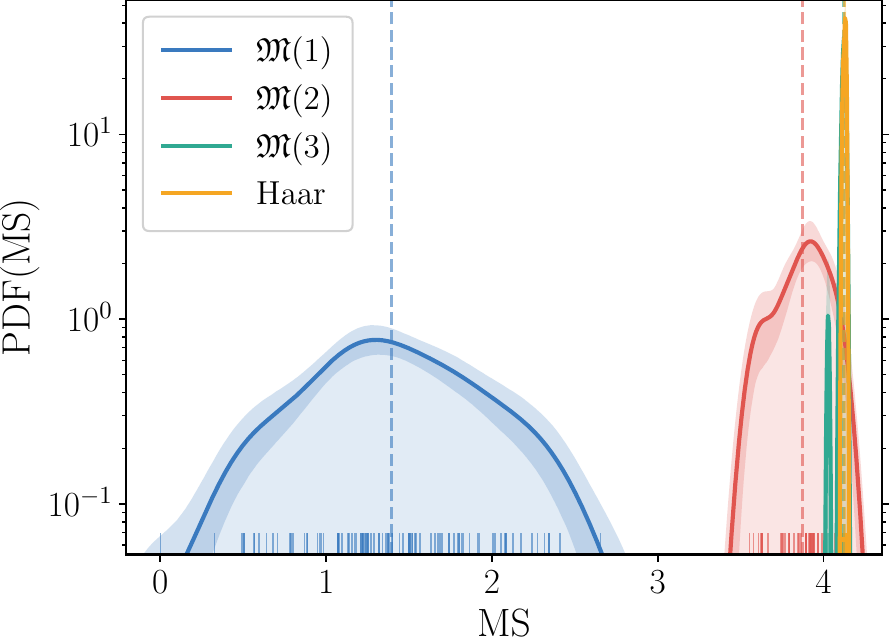}
    \caption{Kernel density estimates of the empirical MS probability distributions (PDF), and their mean (vertical dashed lines), for ensembles sampled from $\mathfrak{M}(1)$, $\mathfrak{M}(2)$, and $\mathfrak{M}(3)$, together with Haar-random states. The plot provides a qualitative visualization of the regime separation: the ensemble sampled from $\mathfrak{M}(1)$ retains a broad distribution, while the ensemble sampled from $\mathfrak{M}(3)$ becomes more sharply concentrated. The comparison with Haar-random states makes the difference in shape and variance particularly transparent.}
    \label{fig:kde_pdf}
\end{figure}

Figure~\ref{fig:kde_pdf} shows kernel density estimates of the empirical MS distributions for ensembles sampled from several $\mathfrak M(k)$ families and Haar-random states. The qualitative separation between the $T$- and $W$-magic regimes is already apparent. The $\mathfrak M(1)$ ensemble retains a broad distribution of Clifford-minimized entanglement values, indicating strongly instance-dependent protected entanglement. By contrast, the $\mathfrak M(2)$ and especially $\mathfrak M(3)$ ensembles rapidly develop sharply concentrated distributions, consistent with the onset of a typical, Haar-like magic-entanglement relation. Thus, Fig.~\ref{fig:kde_pdf} provides a qualitative visualization of the orbit-based organization of state space and its apparent crossover from $T$-magic to $W$-magic as additional reduced local nonstabilizer layers are included.

\begin{figure}
    \centering
    \includegraphics*[trim={0 45pt 0 0}, width=.8\linewidth]{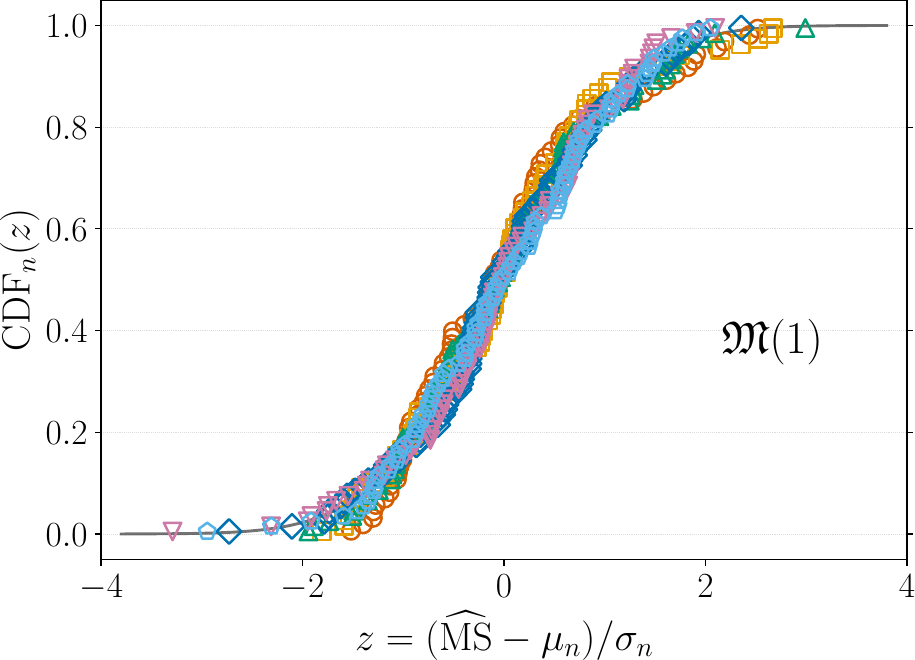}
    \includegraphics*[trim={0 45pt 0 0}, width=.8\linewidth]{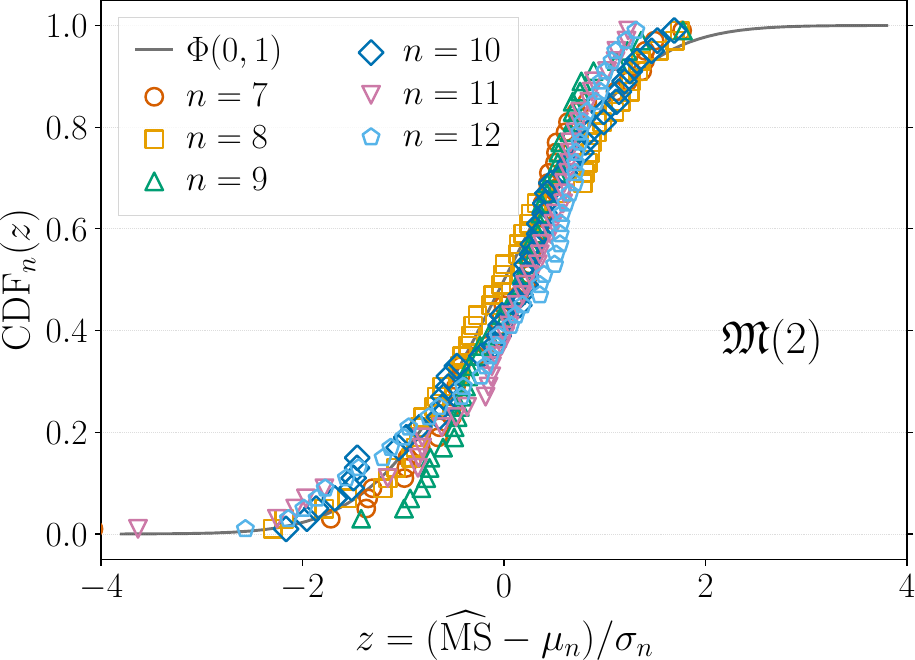}
    \includegraphics*[width=.8\linewidth]{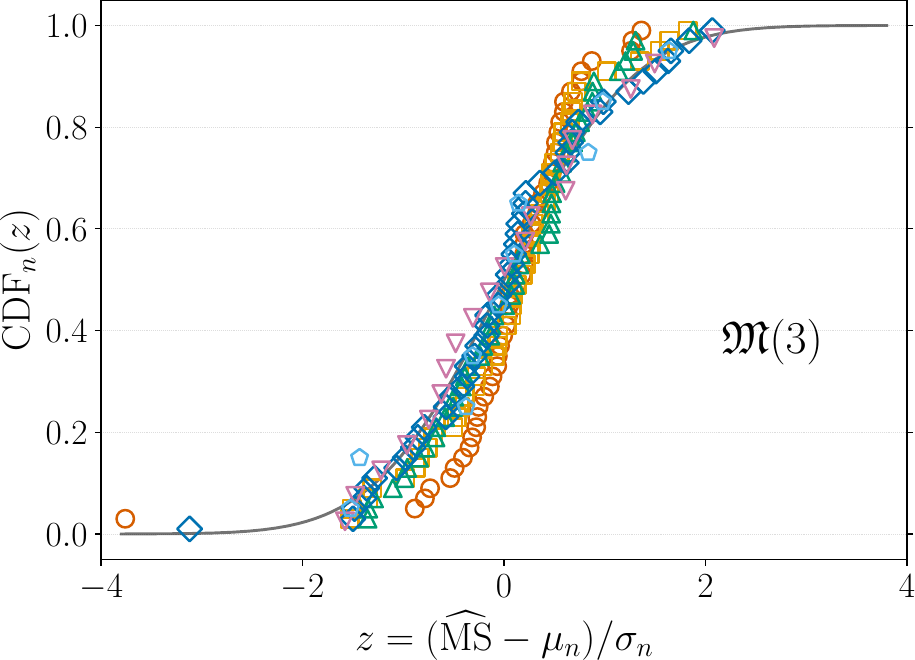}
    \caption{Empirical cumulative distribution function (CDF) collapse of the standardized positive $\widehat{\MS}$ values, $(\widehat{\MS} - \mu_n) / \sigma_n$, for ensembles sampled from $\mathfrak{M}(k)$ with $k=1,2,3$, using $n \in [7,12]$. Distinct markers indicate system size. The ensemble sampled from $\mathfrak{M}(1)$ remains comparatively broad and follows the normal reference (continuous line) more closely, consistent with approximately Gaussian finite-size fluctuations. By contrast, the ensembles sampled from $\mathfrak{M}(2)$ and $\mathfrak{M}(3)$ concentrate much more sharply, displaying a Haar-like suppression of large deviations from the typical value.}
    \label{fig:cdf-collapse}
\end{figure}

For the numerical analysis it is convenient to introduce the normalized magical entanglement
\begin{equation}
    \label{eq:MS_normd}
    \widehat{\MS}_{AB}(\ket\psi)
    :=
    \frac{\MS_{AB}(\ket\psi)}{S_{AB}(\ket\psi)},
\end{equation}
with MS and $S$ evaluated for the same bipartition $AB$. This dimensionless quantity measures the fraction of the original entanglement that is protected against Clifford reduction. By Proposition~\ref{prop:MS_props}(d), $0 \leqslant \widehat{\MS} \leqslant 1$. When $S(\ket\psi) = 0$, both the total and protected entanglement vanish, and we set $\widehat{\MS} = 0$ by convention. Thus, whereas MS measures the absolute amount of entanglement that survives optimal Clifford simplification, $\widehat{\MS}$ indicates whether it is a negligible or an order-one correction of the available bipartite correlations.

Figure~\ref{fig:cdf-collapse} shows the empirical cumulative distribution function (CDF) collapse of the standardized positive $\widehat{\MS}$ values, $(\widehat\MS - \mu_n) / \sigma_n$, for $\mathfrak{M}(k)$, with $k = 1,2,3$. The CDF is used as the main distributional tool because it avoids the binning ambiguity of histogram-based probability density function (PDF) estimates and is less sensitive to finite-sample noise. The $T$-magic ensemble $\mathfrak{M}(1)$ retains a comparatively broad fluctuation profile, whereas the $W$-magic ensembles $\mathfrak{M}(2)$ and $\mathfrak{M}(3)$ concentrate much more sharply. The standardized $\mathfrak{M}(1)$ data remain comparatively close to the normal reference, consistent with broad, approximately Gaussian finite-size fluctuations. By contrast, the $W$-magic ensembles do not simply approach a finite-width Gaussian law: their Haar-like character is reflected in the suppression of large deviations and in the tendency toward a near-delta regime. The related PDF-based diagnostics are shown in App.~\ref{app:num_res}.

\begin{figure}
    \centering
    \hspace{5pt}%
    \includegraphics*[trim={0 43pt 0 0}, width=.78\linewidth]{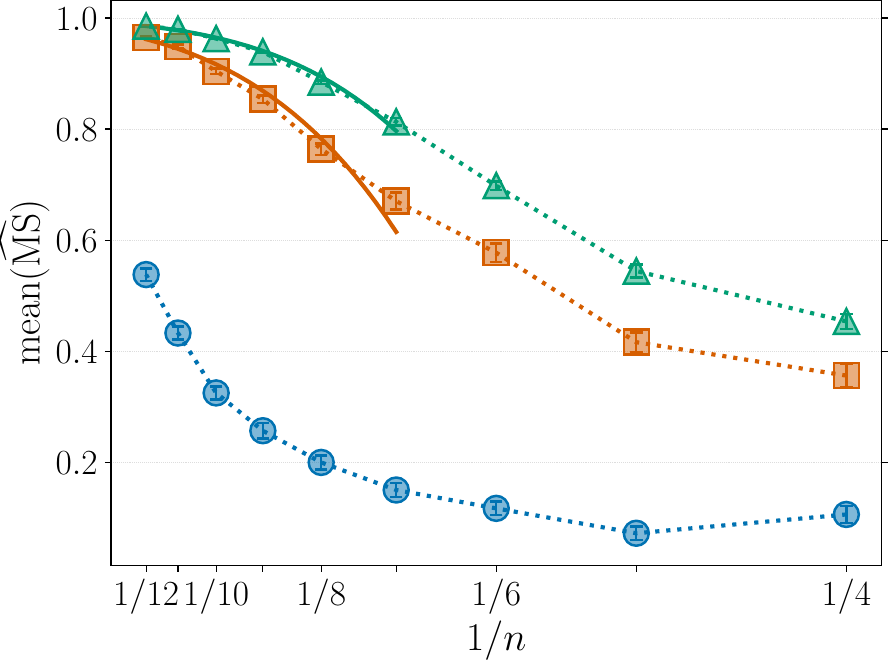}
    \includegraphics*[trim={0 43pt 0 0}, width=.8\linewidth]{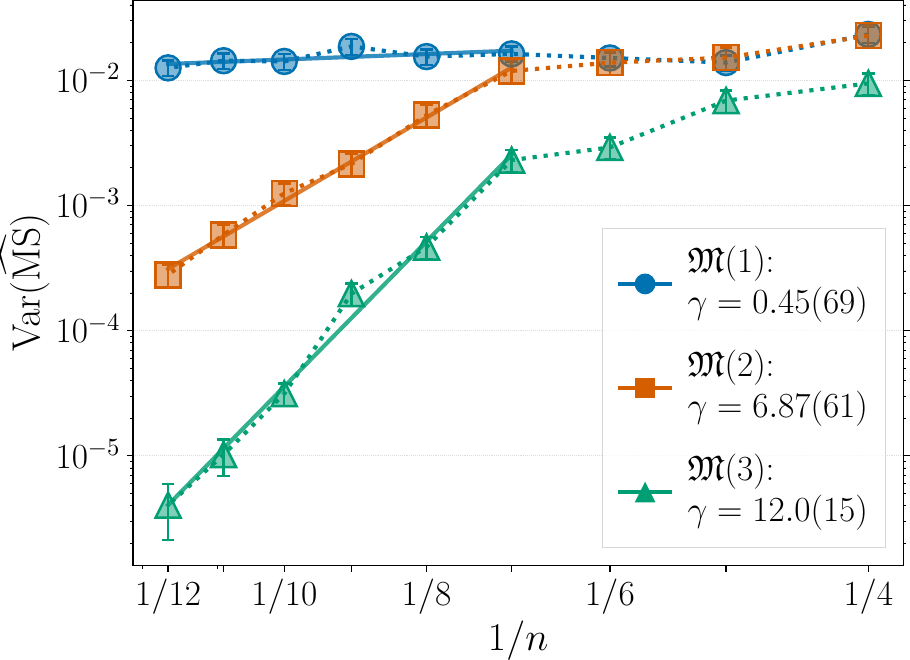}
    \includegraphics*[width=.8\linewidth]{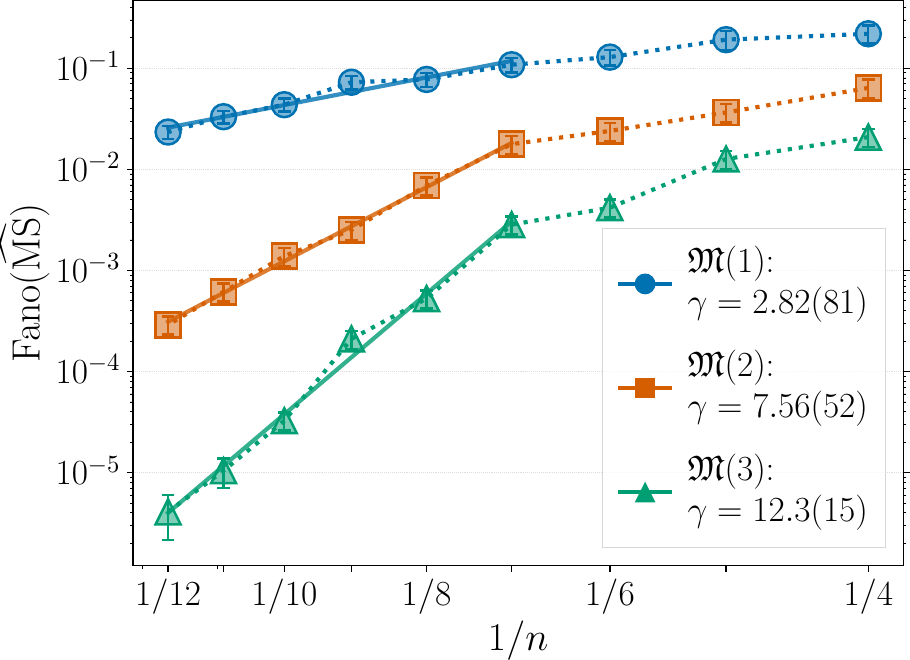}
    \caption{Moment analysis for the positive $\widehat\MS$ values. Data points for $n = 4-12$ are shown to display the early finite-size crossover, while the power-law fits (continuous lines) are performed only over $n = 7-12$. Dotted lines are guides to the eye. The mean sets the scale of $\widehat\MS$, whereas the variance and Fano factor quantify absolute and relative fluctuations. The power-law exponents are defined by $\gamma$. The rapid suppression of both fluctuation measures for ensembles sampled from $\mathfrak{M}(2)$ and $\mathfrak{M}(3)$ indicates strong self-averaging in the $W$-magic regime. The ensemble sampled from $\mathfrak{M}(1)$ remains broad and only weakly self-averaging over the same range.}
    \label{fig:moments}
\end{figure}

The moment-based analysis provides a complementary view of the finite-size behavior of $\widehat{\MS}$, separating the overall scale of the protected entanglement fraction from its absolute and relative fluctuations. As shown in Fig.~\ref{fig:moments}, data points for $n = 4, \dots, 12$ display an early finite-size crossover, while the fits are performed only over $n = 7, \dots, 12$, where the large-$n$ trends are cleaner. The mean primarily fixes the scale of $\widehat{\MS}$. Over the accessible sizes, ensembles sampled from $\mathfrak{M}(2)$ and $\mathfrak{M}(3)$ follow trends compatible with an extensive, volume-law-like behavior of typical states, while $\mathfrak{M}(1)$ is better described by a phenomenological finite-window power law with a much smaller prefactor. This suggests that $\mathfrak{M}(1)$ should not simply be viewed as a smaller-amplitude version of the higher-layer ensembles. Rather, over the sizes considered here, it behaves as an intermediate or weakly protected regime in which the Clifford-irreducible component of the entanglement has not yet developed the typical extensive character seen in the higher-layer ensembles.

The more robust statistical distinction is contained in the variance and the Fano factor, which quantify absolute and relative fluctuations. Their rapid suppression for ensembles sampled from $\mathfrak{M}(2)$ and especially $\mathfrak{M}(3)$ indicates strong self-averaging of $\widehat{\MS}$, whereas $\mathfrak{M}(1)$ remains broad and only weakly self-averaging over the accessible sizes. This behavior is consistent with the interpretation developed throughout the paper: increasing the number of nonstabilizer layers does not merely increase the amount of nonstabilizerness, but changes how this nonstabilizerness is tied to entanglement along the Clifford orbit. In the $T$-magic ensemble, the protected entanglement remains comparatively instance-dependent and does not yet define a clear volume-law-like regime; indeed, $\widehat{\MS} \ll 1$. In the $W$-magic ensembles, by contrast, $\widehat{\MS} = \Theta(1)$, indicating that an extensive fraction of the available bipartite entanglement remains protected. Thus, the higher-layer ensembles exhibit a much more typical orbit-level structure, consistent with the Haar-like organization expected of $W$-magic.

\subsection{Signatures of the \emph{T}- to \emph{W}-magic crossover}
\label{subsec:tw-crossover}

The previous analysis suggests that ensembles sampled from the reduced architecture families $\{\mathfrak M(k)\}_{k\geqslant1}$ exhibit two finite-size MS regimes, summarized in Table~\ref{tab:stats}. The $\mathfrak M(1)$ ensemble remains broad and weakly self-averaging, with $\widehat{\MS}$ well below saturation. By contrast, $\mathfrak M(2)$ and $\mathfrak M(3)$ show order-one $\widehat{\MS}$, rapidly suppressed variance and Fano factor, and distributions trending toward concentration. Thus, $\mathfrak M(k\geqslant2)$ is consistent with the onset of Dirac-like behavior for $\widehat{\MS}$, whereas $\mathfrak M(1)$ remains closer to a broad, approximately Normal finite-size fluctuation profile. This distinction reflects a change not only in the amount of nonstabilizerness, but also in how nonstabilizerness and entanglement are organized across the crossover.

\begin{table*}
\caption{
Finite-size statistical signatures of the $T$-magic and $W$-magic regimes. The table summarizes the behavior observed over the accessible system sizes. 
The last row is definitional and therefore not size-dependent.
}
\label{tab:stats}
\begin{ruledtabular}
\begin{tabular}{lll}
\textbf{MS Property}
&
\textbf{$T$-magic, $\mathfrak{M}(1)$}
&
\textbf{$W$-magic, $\mathfrak{M}(k\geqslant2)$}
\\
\midrule
Mean
&
No Haar-like saturation, $\widehat{\MS} \ll 1$
&
Close to Haar-like saturation $\widehat{\MS} \lessapprox 1$
\\
Variance
&
Weak suppression
&
Rapidly decreasing
\\
Fano factor
&
Weak self-averaging
&
Strong self-averaging
\\
Distributional behavior
&
Broad; approximately Normal
&
Concentrating toward a Dirac-like limit
\\
Physical interpretation
&
Transitional/weakly protected regime
&
Haar-like/protected-entanglement regime
\\
\midrule
Architecture
&
One reduced local layer
&
Reduced $k\geqslant2$ interleaving
\\
\end{tabular}
\end{ruledtabular}
\end{table*}

A closely related perspective comes from entanglement-spectrum statistics in random Clifford circuits with non-Clifford insertions. Ref.~\cite{Zhou_2020} showed that entanglement entropy alone does not reveal the limitations of Clifford dynamics, whereas the entanglement spectrum does: Clifford circuits yield Poisson-distributed levels, while universal non-Clifford resources lead to Wigner-Dyson statistics. This transition depends not only on inserting nonstabilizer gates, but on how they are interleaved with Clifford evolution. For nonstabilizer product states evolved by random Clifford circuits, a single inserted $T$ gate followed by further Clifford evolution drives the spectrum toward Haar-random behavior in the thermodynamic limit; stabilizer initial states do not show this transition under a single inserted $T$ layer. In our setting, this aligns with the passage from $\mathfrak M(1)$ to $\mathfrak M(2)$: the crossover reflects not just added nonstabilizerness, but a different interleaving of local nonstabilizer resources with nonlocal Clifford evolution.

Motivated by our numerical results and the MS-based $T/W$ regime structure observed in ensembles sampled from $\{ \mathfrak{M}(k) \}_{k \geqslant 1}$, the following remark summarizes the asymptotic interpretation suggested by the data.

\begin{remark}[Asymptotic Haar-like behavior]
    For every $k\geqslant2$, there exists a system size $n(k)$ such that, for $n \geqslant n(k)$, a typical $n$-qubit state sampled from $\mathfrak{M}(k)$ satisfies
    \[
        \MS(\ket{\psi}) \lessapprox S(\ket{\psi}),
        \text{ or equivalently, }
        \widehat{\MS}(\ket{\psi}) \lessapprox 1.
    \]
    Then, typically,
    \[
    \begin{aligned}
        \ket{\psi} \in \mathcal{M}_W
        & \implies
        \MS(\ket{\psi}) \lessapprox S(\ket{\psi}), 
        \\
        \ket{\psi} \in \mathcal{M}_T
        & \centernot{\implies}
        \MS(\ket{\psi}) \lessapprox S(\ket{\psi}).
    \end{aligned}
    \]
\end{remark}

The resulting physical picture is that, in the $T$-magic regime, local nonstabilizer resources can coexist with entanglement, but the Clifford-protected component remains weak and state-dependent. In the $W$-magic regime, by contrast, an order-one fraction of the bipartite entanglement survives Clifford simplification, large deviations are suppressed, and the ensemble approaches Haar-like volume-law behavior, with $\widehat{\MS}\to1$ expected in the fully Haar-like limit. The finite-$k$, finite-$n$ data therefore indicate the onset of a crossover driven not merely by increased nonstabilizerness, but by a change in how nonstabilizerness protects entanglement against Clifford reduction. This is summarized in Table~\ref{tab:stats}.

\section{Discussion and outlook}
\label{sec:discuss}

\subsection{Clifford irreducibility and MS orbit decomposition}

The main distinction uncovered in this work is not merely that different nonstabilizer states require different resources, but that they differ in how their entanglement behaves across the Clifford orbit. From this perspective, MS identifies the Clifford-irreducible component of entanglement, and the separation between $T$- and $W$-magic becomes a distinction between weakly and strongly constrained magic-entanglement correlation.

This interpretation gives a useful way of reading the numerical results of Sec.~\ref{sec:num_res}. In the $T$-magic regime, local nonstabilizer resources can coexist with entanglement, but the Clifford-irreducible component remains comparatively weak and state-dependent. In the $W$-magic regime, by contrast, this component becomes typical: large deviations are suppressed, and the ensemble approaches the Haar-like behavior expected of highly scrambled states. Thus, the numerical $T$- to $W$-magic crossover is naturally interpreted as a change in how nonstabilizerness becomes tied to entanglement.

In this language, $W$-magic states are naturally associated with a stronger form of irreducibility: their entanglement cannot be removed by even the most favorable Clifford processing. The obstruction is therefore not entanglement alone, nor nonstabilizerness alone, but the way the two resources are tied together (regardless of whether both magic and entanglement are extensive). This gives a compact formulation of the structural feature that distinguishes $W$-magic from $T$-magic.

\begin{definition}[Fundamental property of $W$-magic]\label{def:fund_prop_W}
    Let $\mathcal{E} \subset \mathcal{M}$. We say that $\mathcal{E}$ exhibits the fundamental property of \emph{$W$-magic} if, for arbitrary $\ket\psi \in \mathcal{E}$, \emph{any} Clifford gate $C \in \mathcal{C}_n$ and \emph{any} total magic-infused product state $\ket{\phi_1} \otimes \dots \otimes \ket{\phi_n}$, $C \ket\psi \neq \bigotimes_j \ket{\phi_j}$.
\end{definition}

Intuitively, this says that the entanglement of $\mathcal{E}$-like states is inherently linked to magic and lies outside the disentangling power of stabilizer operations, even when acting on magic-infused product states. The circuit depicted in Fig.~\ref{fig:c1} illustrates this idea.

\begin{figure}[b]
    \centering
    \includegraphics*[height=.3\linewidth]{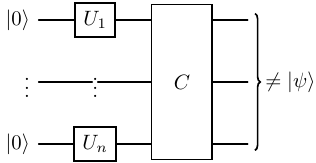}
    \caption{Circuit representation of the fundamental property of $W$-magic. A state satisfying this property has no total magic-infused product representative in its Clifford orbit. Equivalently, it cannot be generated from such a product state by Clifford processing alone.}
    \label{fig:c1}
\end{figure}

Proving such a property for \emph{all} $W$-magic states is far from simple. Nevertheless, the property holds for Haar-random states with probability $1$, as shown in App.~\ref{app:fundamental_prop}. In the same appendix we also discuss explicit families supporting this interpretation: it holds analytically for nonstabilizer hypergraph states, and we give numerical evidence for Dicke states. These examples are not meant to exhaust the $W$-magic class, but to illustrate why the irreducibility captured by MS is a natural structural feature of the $W$-magic regime.

The same idea has a geometric interpretation via the MS canonical form and the orbit invariants introduced in Sec.~\ref{sec:MS_form}. If $C^\ast$ minimizes $\MS(\ket\psi)$, then $C^\ast\ket\psi$ is a Clifford-reduced representative of the orbit. Its Schmidt form defines the MS spectrum $\bm{\lambda}(\ket\psi)$ and the MS rank $r_{\MS}$, which describe the entanglement structure left after minimizing over Clifford-compatible transformations. In this sense, the MS canonical form makes the Clifford-irreducible part of the state explicit, while $\bm{\lambda}(\ket\psi)$ and $r_{\MS}$ provide orbit-derived descriptors of that structure.

The orbit-based viewpoint also gives a geometric interpretation of the magic-entanglement interplay. Since the Clifford group acts on state space, it induces an equivalence relation whose equivalence classes are the Clifford orbits defined in Proposition~\ref{prop:cliff_orb}. These orbits are either identical or disjoint, and hence provide an orbit decomposition of state space. The definition of magical entanglement is naturally adapted to this decomposition: for a fixed bipartition, $\MS(\ket\psi)$ is the minimum entanglement attained along $\mathcal{O}_{\mathcal{C}_n}(\ket\psi)$. Thus, MS assigns to each Clifford orbit the least amount of entanglement that survives stabilizer reduction; because different orbits may share the same value, its level sets define a coarser organization of magic state space. This gives a geometric meaning to the distinction between removable and irreducible magic-entanglement structure: states in the same Clifford orbit are equivalent under stabilizer resources, while different MS values distinguish orbits with different amounts of Clifford-irreducible entanglement.

\subsection{Connections with related approaches}

A close construction to the one presented here is nonstabilizerness entanglement entropy, a residual entanglement entropy obtained after excluding Clifford-circuit contributions~\cite{Huang_2024_NSEE}. At the entropy level, this is close to MS; the difference is that here the residual-entanglement idea is developed as a Clifford-orbit framework. The definition of MS applies to general bipartite entanglement measures and leads to canonical representatives, spectral descriptors, and the $T$-/$W$-magic separation. In this sense, MS functions not only as a residual entropy, but as an organizing principle for magic state space. Clifford-augmented density matrix renormalization group and stabilizer disentangling provide complementary computational and many-body perspectives on the same separation between stabilizer-compatible and genuinely nonstabilizer entanglement~\cite{Qian_2024_CAMPS, Frau_2025}.

Related connections also arise in random-state and orbit-based studies of magic and entanglement. Stabilizer entropy, entanglement concentration, and Clifford-orbit typicality all point to regimes in which magic and entanglement acquire Haar-like statistical structure~\cite{Szombathy_2025, Bittel_2026}, while the Schmidt-orbit viewpoint highlights that nontrivial orbit structure can persist even when global statistics appear typical~\cite{Iannotti_2025}. MS complements these results by asking a different question: not how typical a Clifford orbit appears, but how much entanglement remains after the Clifford freedom has been exhausted.


The MS orbit decomposition also clarifies the relation between MS and other combined-resource measures, such as nonlocal nonstabilizerness~\cite{Qian_2025}. There, one minimizes a magic measure over local-unitary transformations, removing local-basis dependence while keeping the bipartite entanglement structure fixed; geometrically, the optimization is carried out along continuous local-unitary orbits, naturally tied to the Schmidt spectrum. The logic of MS is different but complementary: one minimizes entanglement over the Clifford orbit $\mathcal O_{\mathcal C_n}(\ket\psi)$, so the relevant equivalence classes are discrete Clifford orbits. Thus, nonlocal nonstabilizerness identifies magic robust to local-basis changes, whereas MS identifies entanglement that survives Clifford reduction. This makes MS an orbit-level invariant: it assigns to each Clifford orbit the value of MS, and its level sets provide a coarser organization of the magic state space. This distinction is natural in fault-tolerant settings, where Clifford operations are the stabilizer-preserving free operations.

\subsection{Outlook}

The central message of this work is that the relation between entanglement and nonstabilizerness is not captured by either resource separately. Stabilizer states may be highly entangled while remaining classically tractable, and nonstabilizer structure may appear without being irreducibly tied to multipartite correlations. Magical entanglement isolates the part of entanglement that remains nontrivially attached to the Clifford orbit, turning the magic-entanglement interplay into an orbit-level question. The MS canonical form, the MS spectrum, and the MS rank provide concrete descriptors of this residual structure, while the numerical results separate weakly protected $T$-magic behavior from the strongly self-averaging, Haar-like $W$-magic regime. In this sense, the formal orbit structure of Sec.~\ref{sec:MS_form} and the numerical regime structure of Sec.~\ref{sec:num_res} point to the same conclusion: MS organizes magic state space according to how nonstabilizerness protects entanglement.

A natural refinement is to study the spectral information hidden inside a fixed value of MS. The scalar MS is a coarse summary of the Clifford-minimized entanglement, while $\Lambda$ retains the Schmidt spectra of the minimizing representatives. This distinction parallels the familiar difference between entanglement entropy and the entanglement spectrum, where the latter contains information lost by a single entropy value~\cite{Li_2008}. It also suggests studying the statistics and flatness of the MS spectrum, in analogy with entanglement-spectrum diagnostics of Clifford versus non-Clifford dynamics and of nonstabilizerness~\cite{Zhou_2020, Tirrito_2024}. Majorization provides a natural way to compare such spectra, as in pure-state bipartite entanglement theory~\cite{Nielsen_1999}. In the present context, this suggests a refinement of the MS classification in which states with equal MS values may still differ in the majorization structure of their minimizing spectra.

Several directions remain open. One is how this framework extends beyond pure states, where mixed-state analogues may require optimizing over broader classes of free operations or accounting for noise in a more intrinsic way. A second direction is the multipartite problem: while bipartitions already reveal a sharp regime structure, a fuller understanding of magical entanglement in multipartite settings may require families of cuts, tensor-network formulations, or new invariants. It is also natural to ask how MS, its spectrum, and rank relate more directly to simulation complexity, scrambling, and pseudorandomness. Algorithmically, improving Clifford-orbit minimization and connecting it with Clifford-augmented and stabilizer tensor networks~\cite{Qian_2024_CAMPS, MasotLlima_2024} could make the present framework more practical for larger systems.


\begin{acknowledgments}
J.R.\ thanks M. Dalmonte, M. Collura, and C. Galindo for insightful discussions, and D. Ruge for many valuable exchanges over the years. 
J.R.\ also acknowledges support from the ICTP through the Associates Programme (2026--2031) and from the Office of the Vice President for Research and Creative Activities and the Office of the Vice President for Research of the Faculty of Science at Universidad de los Andes under the FAPA grant.
\end{acknowledgments}

\appendix

\section{Preliminaries}

The Clifford group is a central tool in quantum information theory. It is efficiently simulable on a classical computer~\cite{Gottesman_1999}, while still allowing superposition, interference, and high entanglement. More importantly, the Clifford group defines the stabilizer sector, and therefore also provides the reference structure with respect to which nonstabilizer resources are identified. The definitions discussed here can be found in greater depth in Refs.~\cite{Gottesman_1999, Aaronson_2004}.

\begin{definition}[Clifford group]
    The $n$-qubit Clifford group is defined as the normalizer of the $n$-qubit Pauli group $(\mathcal{P}_n)$ in $\SU(2^n)$, $\mathcal{C}_n := N_{\SU(2^n)}(\mathcal{P}_n)$:
    \begin{equation*}
        \mathcal{C}_n := \{ U \in \SU(2^n) : UPU^\dagger \in \mathcal{P}_n, \ \forall P\in\mathcal{P}_n \}.
    \end{equation*}
\end{definition}

\begin{definition}[Stabilizer states]
    Let $\operatorname{Stab}(\ket\psi) = \{ U \in \SU(2^n) : U\ket\psi = \ket\psi \}$, and denote the \enquote{stabilizer Pauli group} of state $\ket\psi$ by $\operatorname{StP}(\ket\psi) := \operatorname{Stab}(\ket\psi) \cap \mathcal{P}_n$. Then, the set of $n$-qubit stabilizer states is defined as
    \begin{equation*}
        S_n := \{ \ket\psi, \text{ $n$-qubit state}: |{\operatorname{StP}}(\ket\psi)| = 2^n \}.
    \end{equation*}
    Since Clifford operations preserve the size of the stabilizer Pauli group, they map stabilizer states to stabilizer states. Conversely, every $n$-qubit stabilizer state can be obtained from $\ket0^{\otimes n}$ by applying a Clifford operator.
\end{definition}

\begin{definition}[Nonstabilizer states and magic]
    Nonstabilizer states are states satisfying $|{\operatorname{StP}}(\ket\psi)| < 2^n$. Similarly, nonstabilizer operators are operators that lie outside the Clifford group. Important examples arise in higher tiers of the Clifford hierarchy~\cite{Gottesman2_1999, Cui_2017, Zeng_2008}, such as the $T$ gate in the third tier. The term \enquote{magic} is often used interchangeably with nonstabilizerness for both states and operators.
\end{definition}

\begin{definition}[Clifford hierarchy~\cite{Gottesman2_1999, Cui_2017, Zeng_2008}]
    For $n$-qubit operators, the $k$-th tier of the Clifford hierarchy is defined recursively as
    \begin{equation*}
        \mathcal{C}_n^{(k)}=\{U\in \SU(2^n): UPU^\dagger\in \mathcal{C}_n^{(k-1)},\ \forall P\in\mathcal{P}_n\},
    \end{equation*}
    where $\mathcal{C}_n^{(1)} = \mathcal{P}_n$ is the $n$-qubit Pauli group and $\mathcal{C}_n^{(2)} = \mathcal{C}_n$ is the $n$-qubit Clifford group.
\end{definition}

The first two tiers form groups, while higher tiers do not in general. Since the hierarchy is nested, $\mathcal{C}_n^{(k-1)} \subseteq \mathcal{C}_n^{(k)}$~\cite{Zeng_2008}, tiers beyond the second contain operators that are non-Clifford. In particular, the third tier, which maps Pauli strings to Clifford operations under conjugation, includes prototypical gates such as the $T$ gate, the controlled-controlled-NOT gate, also known as the Toffoli gate $CCX$, and the controlled-Hadamard gate $CH$.
Accordingly, a gate set $\mathcal G$ contains non-Clifford resources whenever $\mathcal G \cap (\mathrm U(2^n) \setminus \mathcal C_n^{(2)}) \neq \varnothing$; that is, whenever at least one gate in $\mathcal G$ lies outside the Clifford group.

\section{Further MS properties and prototypical states}
\label{app:typ_states}

\subsection{Prototypical states}

We consider here additional examples illustrating how MS distinguishes entanglement that is Clifford-removable from entanglement that remains tied to nonstabilizer structure.

\paragraph*{GHZ states.} The GHZ family represents one of the two nonequivalent classes of genuine tripartite entanglement under stochastic local operations and classical communication. It is defined by $\ket{\mathrm{GHZ}_n} = (\ket{0^n} + \ket{1^n}) / \sqrt 2$. GHZ states are stabilizer states; thus, their magic content vanishes with respect to any standard stabilizer-based magic measure. Across any $1:\mathrm{rest}$ bipartition, the entanglement entropy is $S_{\rm GHZ}(\rho_1) = \lg 2$. All of this entanglement is Clifford-removable, since $\ket{\mathrm{GHZ}_n}$ is generated from $\ket{+}^{\otimes n}$ by Clifford gates. Therefore, $\MS(\ket{\mathrm{GHZ}_n}) = 0$. This is consistent with the fact that GHZ states are Clifford-equivalent to product states (in this case, by applying $n-1$ CNOTs).

\paragraph*{CCZ state.} The CCZ magic state is defined as $\ket{\mathrm{CCZ}} = CCZ\ket{{+}{+}{+}}.$ Since CCZ is a non‑Clifford three‑qubit gate, this state lies outside the stabilizer formalism and is therefore a non‑stabilizer state. It can be viewed as a hypergraph state generated by a three‑body controlled‑phase interaction, and consequently exhibits genuine multipartite entanglement. The CCZ state is a nonstabilizer hypergraph state. As discussed in App.~\ref{app:fundamental_prop}, nonstabilizer hypergraph states satisfy the fundamental property of $W$-magic; hence no Clifford operation maps $\ket{\mathrm{CCZ}}$ to a product state. Therefore $\MS(\ket{\mathrm{CCZ}}) > 0$.

\paragraph*{Toffoli state.} A Toffoli magic state can be defined as $\ket{\mathrm{Toffoli}} = CCX \ket{{+}{+}0}$. Since $CCX = (I \otimes I \otimes H) CCZ (I \otimes I \otimes H)$, this state is local-Clifford equivalent to $\ket{\mathrm{CCZ}} = CCZ \ket{+}^{\otimes 3}$. Consequently, $\MS(\ket{\mathrm{Toffoli}}) = \MS(\ket{\mathrm{CCZ}})$, and in particular $\MS(\ket{\mathrm{Toffoli}}) > 0$.

\paragraph*{Cluster~$+~T$ states.} Consider a graph (cluster) state $\ket G$, which is a stabilizer state, and apply a $T$ gate to qubit $j$, giving $T_j\ket G$. The local $T$ gate injects nonstabilizerness into an already entangled stabilizer structure; when qubit $j$ is connected to the rest of the graph, this magic can be redistributed through the graph into nonlocal magic. The Clifford-removable part of the entanglement should therefore depend on the graph and on the chosen bipartition. For connected choices in which no Clifford operation disentangles the state across the chosen cut, this gives $\MS(T_j \ket G)>0$.

\paragraph*{Magic product state, $\ket T^{\otimes n}$.} Consider the product state $\ket T^{\otimes n}$, composed of identical single-qubit magic states $\ket T = T \ket+ = \frac{1}{\sqrt{2}}( \ket0 + e^{i \pi / 4} \ket1 )$. Being a product state, $\ket T^{\otimes n}$ contains no entanglement, although additive magic monotones scale linearly with $n$. Therefore, despite having nonzero magic, the absence of entanglement implies that $\MS(\ket T^{\otimes n}) = 0$.

\subsection{More MS properties}

\begin{notation}
    We call $U$ a \emph{TC} operator if $U = C (U_1 \otimes \dots \otimes U_n)$, for local unitaries $U_j \in \mathrm{U}(2)$ and $C \in \mathcal C_n$. Analogously, we call $U$ a \emph{CT} operator if $U = (U_1 \otimes \dots \otimes U_n) C$. A state is called a \emph{TC} state if it can be written as $U \ket0^{\otimes n}$ for some \emph{TC} operator $U$.
\end{notation}

\begin{proposition}
    The following holds:
    \begin{enumerate}[label=$\mathrm{(\roman*)}$]\itemsep0pt
        \item For any $\ket\psi \in \mathrm{TC}$, $\MS(\ket\psi) = 0$.
  
        \item For any $\ket\psi \in \mathcal{M}_T$, there exists a \emph{CT} operator $U$ such that $\MS(U \ket\psi) = 0$.
  
        \item If $\ket\psi \in \mathcal{M}_W$ satisfies Definition~\ref{def:fund_prop_W}, then $\MS(\ket\psi) > 0$.
  
        \item If $\ket\psi \in \mathcal{M}_W$ satisfies Definition~\ref{def:fund_prop_W}, then there exists no \emph{CT} operator $U$ such that $\MS(U \ket\psi) = 0$.
    \end{enumerate}
\end{proposition}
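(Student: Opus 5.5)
I will treat the four items separately, using only the definition of MS in Eq.~\eqref{eq:MS}, Clifford invariance, Proposition~\ref{prop:cliff_orb}, and Definition~\ref{def:fund_prop_W}.

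\textbf{Item (i).} Write $\ket\psi = C(U_1\otimes\cdots\otimes U_n)\ket0^{\otimes n}$. The vector $(U_1\otimes\cdots\otimes U_n)\ket0^{\otimes n}=\bigotimes_j U_j\ket0$ is a product state, hence separable across every cut $AB$. Moreover, $C^\dagger\ket\psi$ equals this product state and lies in $\mathcal O_{\mathcal C_n}(\ket\psi)$. Proposition~\ref{prop:cliff_orb} then gives $\MS(\ket\psi)=0$. This is the same argument the paper already used for the simple $T$-magic class.

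\textbf{Item (ii).} By the definition of $\mathcal M_T$, $\ket\psi = C_2\bigl(\bigotimes_{j=1}^{\ell}U_j\otimes I_{n-\ell}\bigr)C_1\ket0^{\otimes n}$. I will choose the CT operator $U:=\bigl(\bigotimes_{j}U_j^\dagger\otimes I_{n-\ell}\bigr)C_2^\dagger$, which has the CT form (local layer)$\times$(Clifford). Then $U\ket\psi = C_1\ket0^{\otimes n}$ is a stabilizer state. By Proposition~\ref{bd:B4}, or because stabilizer states are Clifford-equivalent to $\ket0^{\otimes n}$, we get $\MS(U\ket\psi)=0$. The only point to check is that padding the local layer with identities still gives a legitimate CT operator, which is immediate since $I\in\mathrm U(2)$.

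\textbf{Item (iii).} Suppose, for contradiction, that $\MS(\ket\psi)=0$. Proposition~\ref{prop:cliff_orb} gives a Clifford $C$ with $C\ket\psi=\ket{\phi_A}\otimes\ket{\phi_B}$, a product across $AB$. Definition~\ref{def:fund_prop_W}, however, only forbids full $n$-qubit product states $\bigotimes_j\ket{\phi_j}$, so this is the main obstacle: bipartite separability is weaker than full separability. I would handle it in one of two ways. The first is to restrict to bipartitions with $|A|=|B|=1$ (or to argue recursively). The second, cleaner route is to interpret the fundamental property as applying to the reduced pieces, namely to apply Definition~\ref{def:fund_prop_W} to each factor, using Clifford operators that act locally on $A$ and $B$ (which are elements of $\mathcal C_n$), and then induct on the number of qubits. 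Failing that, I would read the ``total magic-infused product state'' in Definition~\ref{def:fund_prop_W} as a product across the chosen cut, which makes (iii) a direct contrapositive of Proposition~\ref{prop:cliff_orb}. I expect the intended proof to take this reading, and I would state it explicitly as the interpretation being used.

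\textbf{Item (iv).} Suppose some CT operator $U=(U_1\otimes\cdots\otimes U_n)C$ gives $\MS(U\ket\psi)=0$. Then a Clifford $C'$ maps $U\ket\psi$ to a product state, again with the same bipartite-versus-full caveat as in (iii). For the full-product reading, $C'(\bigotimes_j U_j)C\ket\psi=\bigotimes_j\ket{\chi_j}$, which does not immediately give $C\ket\psi=\bigotimes_j U_j^\dagger\ket{\chi_j}$ because $C'$ sits outside the local layer. I would therefore argue by the same separability reasoning as (iii): a local layer preserves product structure, so the obstruction must come from the Clifford part, and I would reduce to (iii) under the adopted reading.
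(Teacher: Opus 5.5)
Your items (i) and (ii) follow the paper's own arguments. For (iii) your diagnosis is correct, and the paper's proof silently takes your third option: it asserts that no Clifford disentangles $\ket\psi$ ``to a product state'' and then invokes Proposition~\ref{prop:cliff_orb}, which concerns products across the cut. Your first two options cannot succeed. Definition~\ref{def:fund_prop_W} constrains only $\ket\psi$, not the factors $\ket{\phi_A},\ket{\phi_B}$, and the implication from the total-product property to $\MS>0$ fails in general.

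Here is a counterexample. Clifford conjugation permutes Pauli strings up to sign, so the number $N(\psi)$ of Pauli strings with nonzero expectation value is a Clifford invariant. A total product state has $N=\prod_j m_j$, where $m_j\in\{2,3,4\}$ is the number of nonzero entries of $(1,x_j,y_j,z_j)$. For $\ket{W_3}$ one finds $N=20$: the eight $Z$-strings, plus $X_iX_j$ and $Y_iY_j$ times $I$ or $Z$ on the third qubit. Since $20$ is not such a product, no Clifford image of $\ket{W_3}$ is a total product. Yet $-Z^{\otimes 3}$ stabilizes $\ket{W_3}$, and $\mathrm{CNOT}_{3\to1}\mathrm{CNOT}_{2\to1}\ket{W_3}=\ket1\otimes(\ket{00}+\ket{01}+\ket{10})/\sqrt3$. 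Hence $\MS(\ket{W_3})=0$ across every cut, which incidentally contradicts the paper's claim $\MS(\ket{W_n})>0$ at $n=3$. So (iii) is available only as the contrapositive of Proposition~\ref{prop:cliff_orb} under the cut-wise reading. A literal-reading proof would have to exploit $\ket\psi\in\mathcal M_W$ substantively, which neither you nor the paper does. Stating the cut-wise reading as an explicit assumption, as you propose, is the right call.

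Item (iv) has a genuine gap, and it is the one you spotted and then waved away. From $\MS(U\ket\psi)=0$ you only get a Clifford $C'$ such that $C'(\bigotimes_j U_j)C\ket\psi$ is a product. The operator $C'(\bigotimes_j U_j)C$ is Clifford--local--Clifford, not Clifford, so nothing reduces to (iii). The phrase ``a local layer preserves product structure'' says nothing about a local layer sandwiched between two Cliffords. The paper's proof has the same flaw: it notes $S(U\ket\psi)=S(C\ket\psi)$ and then treats $\MS(U\ket\psi)=0$ as if it were $S(U\ket\psi)=0$, dropping the outer minimization.

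The step cannot be repaired from Definition~\ref{def:fund_prop_W} under either reading. Take $n=2$, where the two readings coincide, and $\ket\xi=(U_1\otimes U_2)\ket{\Phi^+}$ with generic $U_j$. Its local Bloch vectors vanish and all nine correlations are nonzero, so $N(\xi)=10\notin\{4,6,8,9,12,16\}$ and no Clifford image is a product. Yet the CT operator $U_1^\dagger\otimes U_2^\dagger$ returns $\ket{\Phi^+}$, so $\MS=0$. This $\ket\xi$ has CTC form, hence is $T$-magic and not a counterexample to (iv) itself. But any argument that, like yours and the paper's, uses nothing about $\ket\psi$ beyond Definition~\ref{def:fund_prop_W} would apply to it verbatim. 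What (iv) actually needs is that no Clifford--local--Clifford operator maps $\ket\psi$ to a product across the cut. That is a restatement of the conclusion, not something the fundamental property supplies.
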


\begin{proof}
    We prove each property as follows:
    \begin{enumerate}[label=$\mathrm{(\roman*)}$]\itemsep0pt
        \item 
        Let $\ket\psi \in$~TC. Then there exist $\hat C \in \mathcal C_n$ and local unitaries $U_j$ such that $\ket\psi = \hat C( U_1 \otimes \dots \otimes U_n )\ket0^{\otimes n}$. Taking $D = \hat C^\dagger$ in the minimization gives 
        \begin{equation*}
            \MS(\ket\psi) \leqslant S(D\ket\psi) = S((U_1 \otimes \dots \otimes U_n) \ket0^{\otimes n}) = 0. 
        \end{equation*} 
        Since $\MS \geqslant 0$, it follows that $\MS(\ket\psi) = 0$.
    
    \item 
    For each $\ket\psi \in \mathcal M_T$, we have $\ket\psi = \tilde C_2 (V_1 \otimes \dots \otimes V_n) \tilde C_1 \ket0^{\otimes n}$. Choose the CT operator $U = (V_1^\dagger \otimes \dots \otimes V_n^\dagger) \tilde C_2^\dagger$. Then 
    \[
        U \ket\psi = \tilde C_1 \ket0^{\otimes n},
    \]
    which is a stabilizer state. Hence $\MS(U \ket\psi) = 0$.

    \item
    Assume $\ket\psi$ satisfies the fundamental property of $W$-magic in Definition~\ref{def:fund_prop_W}. Then no Clifford operation can disentangle $\ket\psi$ to a product state. By Proposition~\ref{prop:cliff_orb}, this is equivalent to $\MS(\ket\psi) > 0$.

    \item
    Let $U = (U_1 \otimes \dots \otimes U_n) C$ be a CT operator. Since local unitaries do not change bipartite entanglement, $S(U\ket\psi) = S(C\ket\psi)$ for every bipartition. Therefore, if $\MS(U\ket\psi) = 0$, then some Clifford operation would map $\ket\psi$ to a product state, contradicting the fundamental property of $W$-magic. \qedhere    
  \end{enumerate}
\end{proof}

\section{Proof of Proposition~\ref{prop:msk}}
\label{app:msk}

\begin{proposition}
    Let $\ket\psi$ be an $n$-qubit state generated by a $1$-layer architecture, with CTC decomposition $\ket\psi = C_2 (\bigotimes_{j=1}^k V_j \otimes I_{n - k}) C_1 \ket0^{\otimes n}$, and let $S$ be the entanglement entropy. Suppose that each active local gate has the form $V_j = e^{i\theta_j P_j}$, where $P_j \in \mathcal{P}_1$, and is nontrivial and non-Clifford. Then, for any bipartition $AB$, $\MS(\ket\psi) \leqslant k$.
\end{proposition}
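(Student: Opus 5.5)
We bound $\MS(\ket\psi)$ from above by exhibiting one Clifford $D$ with $S(D\ket\psi)\leqslant k$; taking logarithms in base $2$ (as with $\lg$ elsewhere in the paper) gives $k$ bits. The choice is $D = C_2^\dagger$, which gives
\[
    D\ket\psi = \Bigl(\textstyle\bigotimes_{j=1}^k V_j \otimes I_{n-k}\Bigr)\ket{s}, \qquad \ket{s} := C_1\ket0^{\otimes n},
\]
where $\ket s$ is a stabilizer state. We then rewrite each active gate as a superposition of two Cliffords. Because $P_j$ is a single-qubit Pauli with $P_j^2=I$, we have
\[
    V_j = e^{i\theta_j P_j} = \cos\theta_j\, I + i\sin\theta_j\, P_j .
\]
Expanding the tensor product then gives
\[
    D\ket\psi = \sum_{x\in\{0,1\}^k} c_x\, P^{x}\ket s, \qquad P^{x}:=\textstyle\bigotimes_{j} P_j^{x_j},
\]
with at most $2^k$ terms, each of which is a stabilizer state $P^x\ket s$.

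The key step is to bound the Schmidt rank across $AB$ by an expression that does not depend on how entangled $\ket s$ is. A naive sum-of-states bound fails, because $\ket s$ may carry up to $\min(|A|,|B|)$ ebits. The approach that works is to keep the local gates as operators. Each $V_j$ acts on a single qubit, which lies either in $A$ or in $B$, so $V:=\bigotimes_j V_j$ is a product $V_A\otimes V_B$ across the cut. Local unitaries leave the Schmidt spectrum unchanged, so $S(V\ket s)=S(\ket s)$. This quantity is not necessarily $\leqslant k$, so the bound has to come from optimizing over the Clifford orbit rather than from this single representative.

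We therefore use a second Clifford to disentangle. The stabilizer state $\ket s$ is Clifford-equivalent to $\ket0^{\otimes n}$, but a disentangling Clifford would have to commute past the local layer. To arrange this, we first conjugate the Paulis to the $Z$ axis with local Cliffords, so that without loss of generality $P_j = Z_j$ and $V_j$ is diagonal. The layer then acts as a controlled phase on the $k$ active qubits only. Next we choose a Clifford $E$ that fixes the active qubits $1,\dots,k$ pointwise, up to Paulis commuting with the $Z_j$, and that maps $\ket s$ to a state in which the inactive qubits are disentangled from the rest. Concretely, we take the stabilizer group of $\ket s$ and Gaussian-eliminate it relative to the subsystem of active qubits. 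By the standard subsystem normal form, $\ket s$ is then, up to a Clifford acting on the inactive qubits, a product of at most $k$ Bell pairs linking active qubits to inactive ones, together with a stabilizer state on the active qubits and product states elsewhere. Since $V$ acts only on the active qubits, a Clifford $E$ on the inactive qubits commutes with $V$. Applying $E^\dagger$ therefore leaves a state in which all correlations are confined to the active $k$ qubits and at most $k$ partner qubits.

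A local Clifford placement can then be chosen so that the whole $\le 2k$-qubit block straddles $AB$ with at most $k$ qubits on the smaller side. With that placement the entropy is at most $k$. Together with $\MS\leqslant S(E^\dagger C_2^\dagger\ket\psi)$, this gives $\MS(\ket\psi)\leqslant k$.

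The main obstacle is this last step. The subsystem normal form, followed by commuting a Clifford on the inactive qubits past the diagonal layer and then routing qubits across the cut with Cliffords (SWAPs are Clifford, so qubit positions may be permuted freely), must be carried out carefully. In particular, one must ensure that the $k$ active qubits can all be placed on one side of the cut. If $k$ exceeds that side's size, the bound is trivial anyway, since $S\leqslant\min(|A|,|B|)$.
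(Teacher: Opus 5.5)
Your argument is correct, but it takes a different route from the paper. The paper undoes both Clifford layers, taking $C=C_1^\dagger C_2^\dagger$, so that $C\ket\psi=C_1^\dagger\bigl(\bigotimes_j V_j\otimes I_{n-k}\bigr)C_1\ket0^{\otimes n}$. It then expands $\bigotimes_j V_j=\sum_{R\subseteq[k]}c_R P_R$ and notes that each $C_1^\dagger P_R C_1$ is a Pauli string, so every term is a computational-basis state up to phase. The state is therefore a superposition of at most $2^k$ product basis vectors, which gives Schmidt rank at most $2^k$ and $S\leqslant k$ for every cut. In other words, the sum-of-states bound you discarded does work once $C_1^\dagger$ is also applied and pushed through the Pauli expansion.

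You instead undo only $C_2$ and invoke the bipartite normal form of stabilizer states relative to the active set $K$. That is a heavier external lemma, but it buys generality. Your argument never uses $V_j=e^{i\theta_jP_j}$, so the $Z$-axis reduction is unnecessary. It gives $\MS\leqslant k$ for arbitrary single-qubit gates, and in fact for any unitary supported on $k$ qubits, hence $\MS\leqslant\ell$ for a general $\nu$-compressible decomposition. The paper's Pauli expansion applied to general gates would give $4^k$ terms and only the bound $2k$. With SWAP routing, your route even gives $\MS=0$ whenever $\max(|A|,|B|)\geqslant k+p$.

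Two points of precision.

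First, a Clifford $E$ acting only on the inactive qubits cannot in general produce ``Bell pairs $\otimes$ a stabilizer state on the active qubits''. That would also need a Clifford on $K$, which does not commute with the layer. What you do get, and all you actually use, is $\ket s=(I_K\otimes E)(\ket\chi_{K\cup K'}\otimes\ket0^{\otimes(n-k-p)})$ with $|K'|=p\leqslant k$. This holds because $\rho_{\bar K}$ is proportional to the projector onto a $2^p$-dimensional stabilizer code, and a Clifford on $\bar K$ rotates that code onto $\mathbb C^{2^p}\otimes\ket0^{\otimes(n-k-p)}$.

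Second, the last step is not an obstacle. For any placement of qubits, $S_{AB}\leqslant\min\bigl(|A\cap(K\cup K')|,|B\cap(K\cup K')|\bigr)\leqslant(k+p)/2\leqslant k$, so no routing across the cut is needed.
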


\begin{proof} 
We show that there exists a Clifford operator $C$ such that $C \ket\psi$ has support on at most $2^k$ computational-basis states. Taking $C = C_1^\dagger C_2^\dagger$, we have 
\[
    C\ket\psi = C_1^\dagger \Biggl(\bigotimes_{j=1}^k V_j \otimes I_{n-k}\Biggr) C_1\ket0^{\otimes n}.
\]
Since $V_j = e^{i\theta_j P_j}$ with $P_j \in \mathcal P_1$, each factor can be written as $V_j = \cos\theta_j I + i\sin\theta_j P_j$. Therefore,
\begin{equation*}
    \bigotimes_{j=1}^k V_j = \sum_{R\subseteq [k]} c_R P_R,
\end{equation*}
where the sum runs over all subsets $R$ of $[k] = \{1,\dots,k\}$, and hence contains at most $2^k$ nonzero terms. Explicitly,
\begin{equation*}
\begin{aligned}
    c_R &= \prod_{j\in R} i\sin\theta_j \prod_{j\notin R}\cos\theta_j, 
    \\ 
    P_R &= \Biggl(\bigotimes_{j\in R} P_j\Biggr) \Biggl(\bigotimes_{j\notin R} I\Biggr) \otimes I_{n-k}.
\end{aligned}
\end{equation*}
up to the ordering of tensor factors. It follows that
\begin{equation*}
    C\ket\psi = \sum_{R\subseteq [k]} c_R C_1^\dagger P_R C_1\ket0^{\otimes n}.
\end{equation*}
Since the Clifford group normalizes the Pauli group,
\[
    Q_R := C_1^\dagger P_R C_1 \in \mathcal P_n
\]
for every $R\subseteq[k]$. Thus $Q_R\ket0^{\otimes n}$ is a computational-basis state up to an overall phase: 
$
    Q_R\ket0^{\otimes n} = e^{i\phi_R}\ket{X_R}.
$ 
Consequently,
\begin{equation*}
    C\ket\psi = \sum_{R\subseteq[k]} \tilde c_R \ket{X_R},
\end{equation*}
with at most $2^k$ nonzero computational-basis terms. Now fix an arbitrary bipartition $AB$. Since $C\ket\psi$ is supported on at most $2^k$ product basis vectors across $AB$, its Schmidt rank satisfies $\operatorname{SR}(C\ket\psi) \leqslant 2^k.$ Therefore, using the entanglement entropy,
\[
    S_{AB}(C\ket\psi) \leqslant \lg[\operatorname{SR}(C\ket\psi)] \leqslant k.
\]
Since $C\in\mathcal C_n$ is an admissible Clifford operator in the minimization defining MS, we conclude that
\[
    \MS_{AB}(\ket\psi) = \min_{D\in\mathcal C_n} S_{AB}(D\ket\psi) \leqslant S_{AB}(C\ket\psi) \leqslant k.
\]
Since the bipartition $AB$ was arbitrary, the result holds for any bipartition.
\end{proof}

\begin{algorithm*}[!t]
\SetAlgoLined 
\DontPrintSemicolon 

\caption{Multi-start Clifford-orbit minimization of entanglement with parallel tempering} \label{alg:ms-search} 

\KwIn{Initial state $\ket\psi$, parameters $(n_{\mathrm{rest}}, N_{\mathrm{steps}}, N_{\mathrm{rep}}, N_{\mathrm{swap}}, T_{\mathrm{start}}, \alpha, K, p_{\mathrm{elite}}, p_{\mathrm{cross}}, W, \gamma_{+}, \gamma_{-}, S_0)$} 

\KwOut{Best state $\ket{\phi_{\mathrm{best}}}$ and estimated minimum entropy $S_{\min}$} Initialize $S_{\min} \leftarrow \infty$, $\ket{\phi_{\mathrm{best}}} \leftarrow \ket{\psi}$, and elite pool $\mathcal{E} = \emptyset$\; Set replica temperatures $T_1 > \cdots > T_{N_{\mathrm{rep}}}$ on a ladder from $T_{\mathrm{max}} = \alpha T_{\mathrm{start}}$ to $T_{\mathrm{min}} = 10^{-4}$\; 
\;
\For{$r=1$ \KwTo $n_{\mathrm{rest}}$}{ 
    \textbf{(Initialization):} With probability $p_{\mathrm{elite}}$, seed from a state drawn uniformly from $\mathcal{E}$; otherwise seed from $\ket{\psi}$. Apply a random Clifford walk of length $n^2$ to obtain $\ket{\phi_0}$\; 
    
    \textbf{(Parallel tempering):} Run $N_{\mathrm{rep}}$ replicas for $N_{\mathrm{steps}}$ steps. At each step, propose $\ket{\phi'} = g\ket{\phi_i}$, where $g$ is drawn from the cross-cut library with probability $p_{\mathrm{cross}}$ and from the full Clifford library otherwise. Accept with probability 
    \[
        \mathbb{P}(\text{accept}) = \min\left\{ 1, e^{-(\Delta E / T_i)\ln 2}
        \right\}, 
        \qquad 
        \Delta E = E(\ket{\phi'}) - E(\ket{\phi_i}), 
    \]
    where $E$ is the sum of single-qubit entanglement entropies\; 
    
    Every $W$ steps, multiply $T_i$ by $\gamma_+$ if the recent acceptance rate is below $0.05$, by $\gamma_-$ if it is above $0.70$, and otherwise apply the baseline cooling $T_i \leftarrow \eta T_i$, with $\eta = (T_{\mathrm{min}}/T_{\mathrm{start}})^{1/N_{\mathrm{steps}}}$\; 
    
    Every $N_{\mathrm{swap}}$ steps, attempt to exchange each adjacent replica pair $(i,i+1)$ with probability 
    \[
        \min\left\{1, \exp\left[-(E_i - E_{i+1}) \left(\frac{1}{T_i} - \frac{1}{T_{i+1}}\right) \ln 2 \right] \right\}\; 
    \]
    
    \textbf{(Greedy descent):} Starting from the best state found, repeatedly apply gates from the cross-cut library and keep only steps that reduce $S_{AB}$, until no further improvement is found\;
    
    \textbf{(Elite update):} Add the resulting state to $\mathcal{E}$, retaining only the $K$ states with lowest $S_{AB}$. Update $\ket{\phi_{\mathrm{best}}}$ and $S_{\min}$ if an improvement is found\; 
    
    \If{$S_{\min} < S_0$}{
        \Return{$\ket{\phi_{\mathrm{best}}},\ 0$}\;
    }
} 
\;
\Return{$\ket{\phi_{\mathrm{best}}},\ S_{\min}$}\;
\end{algorithm*}

\section{MS search algorithm}
\label{app:algo}

We describe the numerical procedure used to estimate MS in Sec.~\ref{sec:num_res}. Since MS is defined by a Clifford-orbit minimization, the algorithm returns an upper bound on the exact value: the lowest entropy found over the explored subset of the Clifford group. The input states are randomly generated from the architecture families $\mathfrak M(k)$ by interspersing $k$ nontrivial local nonstabilizer layers with random Clifford circuits, following the reduced-layer convention of Notation~\ref{not:ctc}. In the implementation, active single-qubit gates are sampled uniformly from $\SU(2)$ and are therefore non-Clifford with probability one.

The search uses a Clifford gate library consisting of all $24$ single-qubit Clifford gates, together with CNOT, CZ, and SWAP. This library generates approximate $n$-qubit Clifford walks. Motivated by efficient canonical Clifford sampling~\cite{Selinger_2015}, each random walk has length $O(n^2)$, with gates drawn with probabilities $0.4$ for single-qubit Cliffords, $0.3$ for CNOT, $0.2$ for CZ, and $0.1$ for SWAP; acted-on qubits are chosen randomly. Given an input state $\ket\psi$, the algorithm uses a multi-restart strategy in which each restart is initialized either from $\ket\psi$ after a random Clifford walk or from an elite pool of the best states found in previous restarts. Restarts are distributed over several CPU cores.

Each restart consists of three stages. First, an initial Clifford walk of length $n^2$ is applied to the seed state, with a bias toward cross-cut gates controlled by the parameter $p_{\mathrm{cross}}$. Second, the algorithm performs a parallel-tempering search. Several replicas of the current state are evolved at different temperatures, where higher temperatures allow more frequent acceptance of moves that increase the entanglement proxy. At each step, a Clifford gate $g$ is proposed, producing $\ket{\phi'} = g \ket{\phi_i}$. The move is accepted with probability
\begin{equation*}
    \mathbb{P}(\text{accept}) = \min\left\{ 1, e^{-(\Delta E / T_i) \ln 2} \right\}, 
\end{equation*}
where $\Delta E = E(\ket{\phi'}) - E(\ket{\phi_i})$ and $E$ is the entanglement proxy. We choose $E$ to be the sum of single-qubit entanglement entropies. The true bipartite entanglement entropy $S_{AB}$ is evaluated only for promising configurations, namely after large proxy improvements or near saturation of the search. Replicas at adjacent temperatures are periodically proposed for exchange using the standard parallel-tempering acceptance rule shown in Algorithm~\ref{alg:ms-search}. Finally, the best state found during parallel tempering is refined by a greedy descent stage, in which only Clifford moves that reduce the true bipartite entropy $S_{AB}$ are accepted. The algorithm returns
\begin{equation*} 
    S_{\min} = S_{AB}(C_{\mathrm{best}} \ket\psi), 
\end{equation*}
where $C_{\mathrm{best}}$ is the Clifford walk producing the lowest entropy found during the complete search.

The algorithm contains several tunable parameters controlling exploration and computational cost, including the number of restarts, Metropolis steps, tempering replicas, swap interval, temperature schedule, elite-pool size, cross-cut gate probability, and reheating/cooling factors. Since no tight theoretical resource bound is available for this heuristic minimization, these parameters were calibrated empirically using representative system sizes. Production runs then used size-dependent parameters, with the number of replicas and Metropolis steps per restart scaled most strongly.

For reference, at $n=8$ we used $n_{\mathrm{rest}}=40$ independent restarts, $N_{\mathrm{steps}}=4000$ Metropolis steps per replica per restart, $N_{\mathrm{rep}}=5$ parallel-tempering replicas, swap interval $N_{\mathrm{swap}}=90$, maximum temperature factor $\alpha=3.20$ so that $T_{\mathrm{max}}=\alpha T_{\mathrm{start}}$, elite-pool size $K=10$, elite seeding probability $p_{\mathrm{elite}}=0.55$, cross-cut gate probability $p_{\mathrm{cross}}=0.72$, acceptance-rate window $W=250$, reheat factor $\gamma_+=1.58$, and cooling factor $\gamma_-=0.83$. The reheat and cooling factors are applied when the window acceptance rate is below $0.05$ and above $0.70$, respectively. The threshold $S_0$ is the numerical tolerance below which a state is treated as disentangled. Algorithm~\ref{alg:ms-search} summarizes the complete procedure.

\section{Additional numerical results}
\label{app:num_res}

\begin{figure}
    \centering
    \includegraphics*[trim={0 42pt 0 0}, width=.8\linewidth]{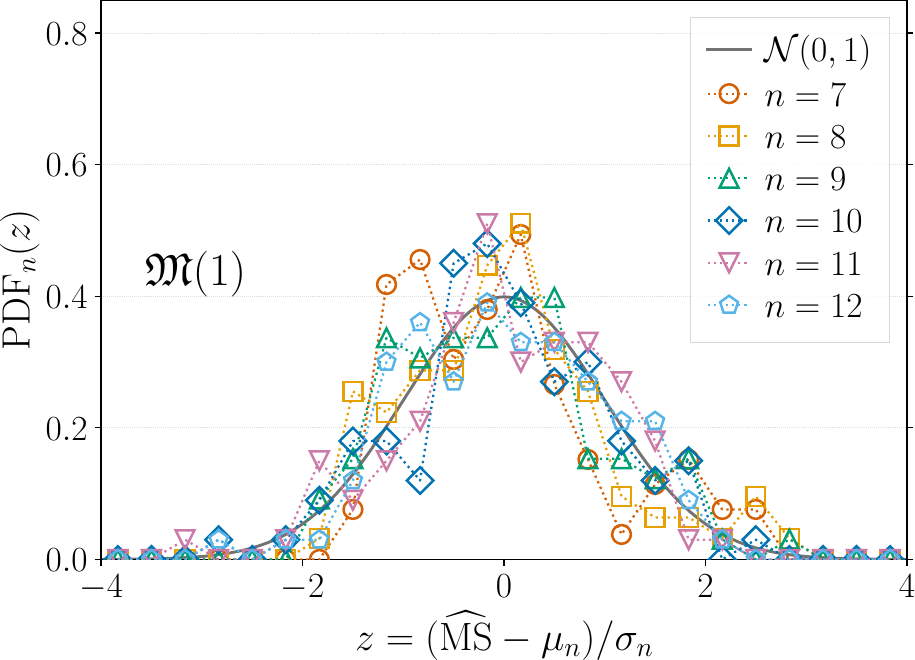}
    \includegraphics*[width=.8\linewidth]{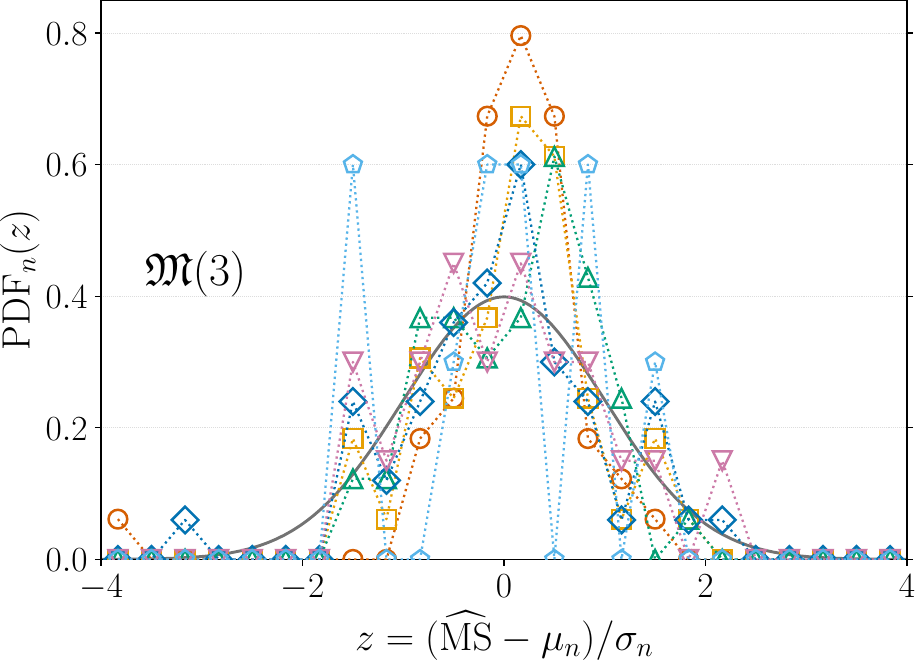}
    \caption{Histogram-based probability density function (PDF) collapse of the standardized positive $\widehat{\MS}$ values, $(\widehat{\MS} - \mu_n) / \sigma_n$, for ensembles sampled from $\mathfrak{M}(1)$ and $\mathfrak{M}(3)$, using $n \in [7,12]$. Distinct markers indicate system size. The ensemble sampled from $\mathfrak{M}(1)$ remains close to the normal reference (continuous line), whereas the ensemble sampled from $\mathfrak{M}(3)$ shows stronger concentration and clearer deviations from a finite-width Gaussian profile.}
    \label{fig:pdf-collapse}
\end{figure}

Figure~\ref{fig:pdf-collapse} shows the histogram-based probability density function (PDF) collapse of the same standardized positive $\widehat{\MS}$ values analyzed through the empirical CDF in Fig.~\ref{fig:cdf-collapse}. This plot provides a more direct visualization of the finite-size fluctuation profiles, and therefore makes the contrast between the broad $T$-magic ensemble and the more concentrated $W$-magic behavior visually transparent. However, unlike the CDF collapse used in the main text, the PDF estimate depends on the choice of binning and is more sensitive to finite-sample noise, especially for $\mathfrak{M}(3)$ at the largest system sizes, where the positive conditional sample becomes smaller. For this reason, the PDF collapse should be interpreted as a qualitative shape diagnostic rather than as the main statistical evidence. Its behavior is nevertheless consistent with the conclusions of Sec.~\ref{sec:num_res}: $\mathfrak{M}(1)$ retains a broader standardized profile, while $\mathfrak{M}(3)$ exhibits stronger concentration. Thus, Fig.~\ref{fig:pdf-collapse} serves as a robustness check supporting the CDF-based distributional analysis.

The Normal Q-Q plots in Fig.~\ref{fig:qq} provide a complementary test of the standardized fluctuation shape for the variable $(\widehat{\MS} - \mu_n) / \sigma_n$, including pointwise $95\%$ confidence bands for Normal order statistics. These plots should not be read as the main evidence for the $T$-magic/$W$-magic distinction, but rather as a shape-level check consistent with the CDF and moment analyses in Sec.~\ref{sec:num_res}. The ensemble $\mathfrak{M}(1)$ remains comparatively close to the Normal reference after standardization, whereas $\mathfrak{M}(3)$ shows systematic tail deviations. This supports the main-text interpretation: the $W$-magic distributions are not approaching a finite-width Gaussian law for $\widehat{\MS}$, but rather a strongly concentrated, near-delta regime in which standardized tail behavior becomes increasingly sensitive to finite-size and sample-size effects.

\section{More on the fundamental property of \emph{W}-magic}
\label{app:fundamental_prop}

\begin{proposition}
    With probability $1$, Haar-random states satisfy the fundamental property of $W$-magic in Definition~\ref{def:fund_prop_W}.
\end{proposition}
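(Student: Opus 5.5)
The plan is a measure-zero argument. Consider the set of \emph{bad} states
\[
\mathcal B := \bigcup_{C\in\mathcal C_n} C^\dagger\,\mathcal P, \qquad \mathcal P := \{\ket{\phi_1}\otimes\cdots\otimes\ket{\phi_n}\},
\]
where $\mathcal P$ is the set of fully product states, taken up to global phase.

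A state $\ket\psi$ violates Definition~\ref{def:fund_prop_W} exactly when $C\ket\psi\in\mathcal P$ for some $C\in\mathcal C_n$, that is, exactly when $\ket\psi\in\mathcal B$. It therefore suffices to show that $\mathcal B$ has Haar measure zero in the projective space $\mathbb{CP}^{2^n-1}$. In particular, this also yields $\MS(\ket\psi)>0$ almost surely for every bipartition.

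The key steps are as follows.

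\emph{Step 1: $\mathcal P$ is a thin set.} The set $\mathcal P$ is the image of the smooth map $(\mathbb{CP}^1)^{\times n}\to\mathbb{CP}^{2^n-1}$ (the Segre embedding). Its real dimension is $2n$, while the ambient space has real dimension $2(2^n-1)$. For $n\geq 2$ we have $2n<2^{n+1}-2$, so $\mathcal P$ is a compact submanifold (equally, a proper real-algebraic subvariety) of strictly lower dimension. Hence it is a null set for the Fubini--Study volume, which coincides with the Haar-induced measure. For $n=1$ every state is a product state, so the statement must be restricted to $n\geq2$; I would point this out explicitly.

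\emph{Step 2: each translate is a null set.} Each $C^\dagger\in\mathcal C_n$ is unitary, so it acts as an isometry of $\mathbb{CP}^{2^n-1}$. The Haar measure on states is unitarily invariant, so $\mu(C^\dagger\mathcal P)=\mu(\mathcal P)=0$.

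\emph{Step 3: pass to the union.} The Clifford group is finite, a fact already used in the paper to guarantee that the minimum in \eqref{eq:MS} is attained. Its elements modulo phase are finitely many, so $\mathcal B$ is a finite union of null sets and $\mu(\mathcal B)=0$. Consequently a Haar-random $\ket\psi$ lies outside $\mathcal B$ with probability $1$, which is precisely the fundamental property of $W$-magic. One could alternatively note that Haar-random states lie in $\mathcal M$ (they are nonstabilizer almost surely) and, being outside $\mathcal B$, satisfy $\MS>0$.

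The main obstacle is Step 1, specifically making the null-measure claim rigorous. I would state that the Haar measure on pure states is the normalized Riemannian volume of $\mathbb{CP}^{2^n-1}$, and that a smooth image of a lower-dimensional manifold has volume zero (Sard-type argument). An equivalent route is to describe $\mathcal P$ as the common zero set of the $2\times2$ minors of all bipartite flattenings of the coefficient tensor. These are nontrivial polynomials, because a generic state is entangled, and the zero set of a nonzero polynomial has Lebesgue measure zero. The only remaining point is to check that at least one such minor is not identically zero, which is immediate from any entangled example such as a Bell pair tensored with $\ket0^{\otimes(n-2)}$.
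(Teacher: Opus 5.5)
Your argument is correct and takes the same route as the paper: exhibit a Haar-null set of bad states, use unitary invariance of the Haar measure to make each Clifford translate null, and take the finite union over $\mathcal{C}_n$. The only difference is the bad set: the paper uses the larger set of states with flat Schmidt spectrum across a fixed cut, whereas you use the Segre variety of fully product states directly, with a cleaner dimension count than the paper's ``$k-1$ constraints'' heuristic.

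One caveat on your aside: $\mu(\mathcal B)=0$ does not by itself give $\MS_{AB}(\ket\psi)>0$, because a state can be product across $AB$ without being fully product. For that claim, rerun the argument with the set of states $\ket{\phi_A}\otimes\ket{\phi_B}$, which has real dimension $2(2^{|A|}+2^{|B|}-2)<2(2^n-1)$. This bipartite conclusion is exactly what the paper's cut-based null set gives for free.
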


\begin{proof}
    Fix a bipartition and let $\mathcal{F}$ be the set of states with flat entanglement spectrum, including the rank-one product case. For fixed Schmidt rank $k$, the condition on the spectrum $\lambda_1 = \dots = \lambda_k$ imposes $k-1$ nontrivial constraints on the Schmidt spectrum. Hence, for each $k$, this set has Haar measure zero; since there are only finitely many possible Schmidt ranks, $\mathcal{F}$ has measure zero.
    
    If the entanglement spectrum of a state $\ket\eta$ can be flattened by a Clifford operator $C\in\mathcal{C}_n$, then $\ket\eta \in C^\dagger(\mathcal{F})$. Since Clifford operators are unitary, they preserve Haar measure, so each $C^\dagger(\mathcal{F})$ has measure zero. Therefore, 
    \begin{equation*}
        \bigcup_{C \in \mathcal{C}_n} C^\dagger(\mathcal{F})
    \end{equation*}
    is a finite union of measure-zero sets, and thus has zero measure.
    
    Consequently, with probability $1$, no Clifford image of a Haar-random state has flat entanglement spectrum. In particular, no Clifford image is a product state across the fixed bipartition. Since the bipartition was arbitrary, Haar-random states satisfy the fundamental property of $W$-magic with probability $1$.
\end{proof}

\begin{figure}
    \centering
    \includegraphics*[height=.52\linewidth]{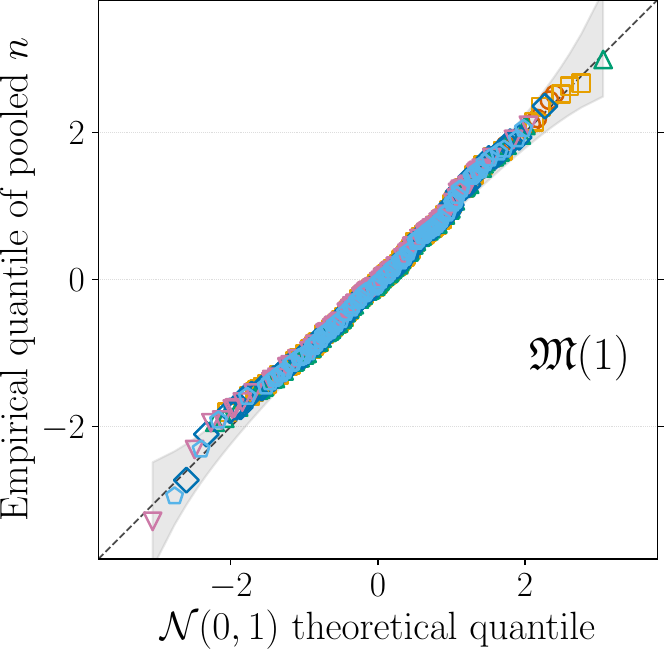}
    \includegraphics*[trim={47pt 0 0 0}, height=.52\linewidth]{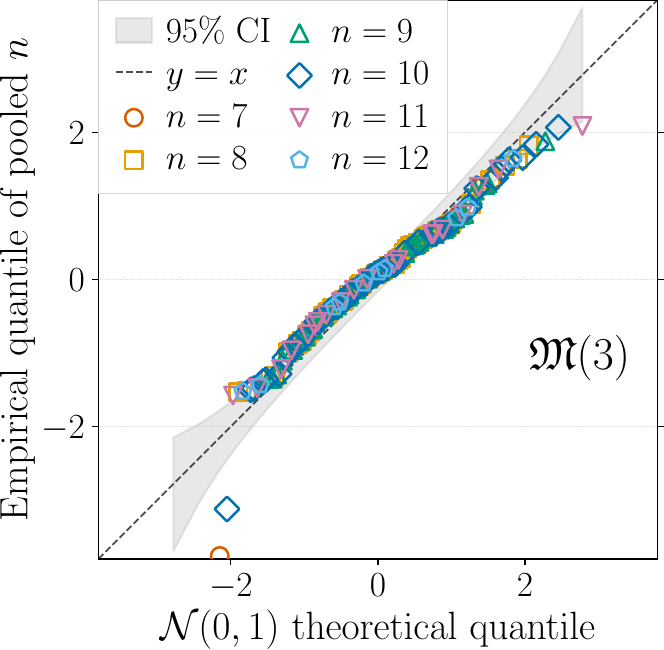}
    \caption{Normal Q-Q diagnostics for the standardized positive $\widehat{\MS}$ fluctuations, with pointwise $95\%$ confidence bands. Distinct markers indicate system size. The ensemble sampled from $\mathfrak{M}(1)$ is comparatively close to the normal reference, while that sampled from $\mathfrak{M}(3)$ shows systematic tail deviations, consistent with strong concentration rather than convergence to a finite-width Gaussian fluctuation law.}
    \label{fig:qq}
\end{figure}

\subsection{Nonstabilizer hypergraph states}

\begin{definition}[Hypergraph states~\cite{Rossi_2013, Salem_2025}]
    Let $E \subseteq \mathcal{P}(\{1, \dots, n\})$ be a set of hyperedges, where each $e \in E$ is a subset of qubits. The $n$-qubit hypergraph (HG) state associated with $E$ is defined by
    \begin{equation*}
        \ket{HG} = \prod_{e \in E} C_e \ket+^{\otimes n}.
    \end{equation*}
    Here $C_e$ denotes the multi-controlled phase gate supported on the qubits in the hyperedge $e$. Explicitly, if $e = \{e_1, \dots, e_m\}$, then $C_e$ acts as
    \begin{equation*}
        C_e = I - 2 \ket{1\cdots1}_{e} \bra{1\cdots1}_{e} \otimes I_{\bar e},
    \end{equation*}
    where $\bar e$ is the complement of $e$. Equivalently, $C_e$ applies a phase $-1$ only to computational-basis states for which all qubits in $e$ are in state $\ket1$. For $|e| = 2$, $C_e$ is the usual $CZ$ gate, while for $|e| = 3$ it is the $CCZ$ gate, and so on. The order of the qubits in $e$ is irrelevant; only the support of the hyperedge matters. The cardinality $|e|$ denotes the number of qubits involved in the hyperedge.
\end{definition}

The nonstabilizerness content of hypergraph states has been systematically characterized in the literature~\cite{Chen_2024, Salem_2025, Poderini_2026}. We will use the following criterion.

\begin{proposition}[\cite{Salem_2025, Poderini_2026}]\label{prop:salem-poder}
    Assume that the hypergraph representation of $\ket{HG}$ is reduced, i.e., repeated hyperedges have been removed. Then
    \begin{equation*}
        \ket{HG} \text{ is a stabilizer state}
        \iff
        \forall e \in E,\ |e| \leqslant 2.
    \end{equation*}
    We denote by $\mathfrak{H}$ the set of nonstabilizer hypergraph states.
\end{proposition}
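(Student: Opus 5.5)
The plan is to prove the two implications separately. The forward direction ($\Leftarrow$) is a direct circuit argument. The converse ($\Rightarrow$) reduces to an algebraic statement about the phase function of the amplitudes, using the known normal form of stabilizer states.

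\emph{Step 1 ($\Leftarrow$).} Suppose every hyperedge satisfies $|e|\leqslant 2$. Hyperedges with $|e|=2$ give $C_e=CZ$, which is Clifford. Hyperedges with $|e|=1$ give $C_e=Z$, which is Pauli. The empty hyperedge, if allowed, contributes only a global sign $-1$. Hence $\prod_{e}C_e$ is a Clifford operator. Since $\ket+^{\otimes n}=H^{\otimes n}\ket0^{\otimes n}$, the state $\ket{HG}$ is a Clifford image of $\ket0^{\otimes n}$, and therefore a stabilizer state (a graph state up to local Paulis).

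\emph{Step 2 (amplitude form).} Every $C_e$ is diagonal in the computational basis and acts as $(-1)^{\prod_{i\in e}x_i}$. Therefore
\[
\ket{HG}=2^{-n/2}\sum_{x\in\mathbb F_2^n}(-1)^{f(x)}\ket x,\qquad f(x)=\sum_{e\in E}\prod_{i\in e}x_i \pmod 2 .
\]
Because the hypergraph is reduced, each monomial appears at most once, so this expression is exactly the algebraic normal form of $f$. Uniqueness of the ANF then gives $\deg f=\max_{e\in E}|e|$. So it suffices to show that a stabilizer state of this form must have $\deg f\leqslant2$.

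\emph{Step 3 ($\Rightarrow$).} I will invoke the Dehaene--De Moor characterization of stabilizer states. Every $n$-qubit stabilizer state has the form
\[
\propto\sum_{x\in\mathcal A}i^{\ell(x)}(-1)^{q(x)}\ket x,
\]
where $\mathcal A\subseteq\mathbb F_2^n$ is an affine subspace, $\ell$ is linear (an integer combination $\sum_j c_jx_j$ with $c_j\in\{0,1\}$), and $q$ is quadratic. For $\ket{HG}$, every amplitude is nonzero, so $\mathcal A=\mathbb F_2^n$. Comparing the two forms, the ratios of amplitudes are real. Hence $i^{\ell(x)-\ell(0)}$ is real for every $x$. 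Testing this on the unit vectors $x=e_j$ forces every $c_j$ to be even, so $c_j=0$ and $\ell\equiv0$. It follows that $(-1)^{f(x)}=\pm(-1)^{q(x)}$ for all $x$, so $f+q$ is constant mod $2$. By uniqueness of the ANF, $\deg f\leqslant 2$, contradicting the existence of a hyperedge with $|e|\geqslant3$.

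The main obstacle is justifying the normal form in Step 3 without simply quoting it. If a self-contained argument is preferred, I would instead work with stabilizer Paulis. I would show that for a real, full-support, equal-magnitude state, each Pauli $X^aZ^b$ in $\mathrm{StP}$ forces the discrete derivative $D_af(x)=f(x+a)+f(x)$ to be affine. Having $2^n$ such Paulis with $a$ ranging over all of $\mathbb F_2^n$ then makes every derivative affine, which is equivalent to $\deg f\leqslant2$. The delicate point in this route is checking that the $a$-components of $\mathrm{StP}$ span $\mathbb F_2^n$. This follows from full support: if some linear functional annihilated every $a$, the stabilizer group would contain a nontrivial $Z$-type Pauli, and the state would then be confined to a proper affine subspace, which is impossible for a state with all $2^n$ amplitudes nonzero.
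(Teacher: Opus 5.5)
Your proof is correct, and for the nontrivial direction it takes a genuinely different route from the paper. The paper's own proof only argues the easy implication: size-two hyperedges are $CZ$ gates, so one obtains graph states. It then defers the converse, that a reduced hypergraph with some $|e|\geqslant 3$ gives a nonstabilizer state, entirely to the cited characterizations of Salem et al.\ and Poderini et al.

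You make the converse self-contained. You encode $\ket{HG}$ by the Boolean phase polynomial $f$, and use reducedness to identify $\sum_{e\in E}\prod_{i\in e}x_i$ with the algebraic normal form of $f$, so that $\deg f=\max_{e}|e|$. You then show that stabilizerness forces $\deg f\leqslant 2$, in one of two ways:
\begin{itemize}
\item via the Dehaene--De Moor normal form, where full support forces the affine subspace to be all of $\mathbb F_2^n$ and realness of amplitude ratios kills the $i^{\ell}$ factor;
\item via the elementary derivative argument, where each $X^aZ^b\in\mathrm{StP}$ makes $D_af$ affine.
\end{itemize}
In the second route, the step you flagged as delicate is cleanest by counting. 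The homomorphism $X^aZ^b\mapsto a$ on the $2^n$-element group $\mathrm{StP}$ (modulo phases) has kernel consisting of $Z$-type stabilizers. Full support forces that kernel to be trivial, so the image is all of $\mathbb F_2^n$. In particular it contains every $e_j$, and $D_{e_j}f$ affine for all $j$ already gives $\deg f\leqslant 2$.

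Your approach explains exactly why the reducedness hypothesis is needed: it is what lets the degree of $f$ be read off from $E$, via uniqueness of the ANF. It also covers the degenerate hyperedges $|e|\in\{0,1\}$ (a global sign and local $Z$'s) that the paper's one-line argument passes over. The paper's approach buys brevity, at the cost of leaving the key implication to the literature.
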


\begin{proof}
    Hyperedges with $|e| = 2$ correspond to $CZ$ gates and therefore generate graph states, which are stabilizer states. Conversely, the cited characterization shows that a reduced hypergraph state containing at least one hyperedge with $|e| \geqslant 3$ is nonstabilizer. Hence $\ket{HG}$ is stabilizer if and only if all hyperedges have size at most two.
\end{proof}

\begin{lemma}\label{prop:HG}
    Let $CCZ$ denote the three-qubit controlled-controlled-$Z$ gate. For any $F \in \mathcal{C}_3$ and any local unitary $L = U_1 \otimes U_2 \otimes U_3$, with $U_j \in \SU(2)$, one has
    \begin{equation*}
        CCZ \neq F L.
    \end{equation*}
    The same obstruction applies to any multi-controlled phase gate $C_e$ with $|e| \geqslant 3$, by restricting to any three qubits in the hyperedge.
\end{lemma}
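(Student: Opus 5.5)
The argument is by contradiction, using that Clifford operators map Pauli operators to Pauli operators under conjugation. Suppose $CCZ = F L$ with $F \in \mathcal{C}_3$ and $L = U_1 \otimes U_2 \otimes U_3$. Then $L = F^\dagger CCZ$. For any $P \in \mathcal{P}_3$ this gives
\[
L P L^\dagger = F^\dagger \, CCZ \, P \, CCZ^\dagger \, F .
\]
We will show that the right-hand side can be a very non-local operator, while the left-hand side never is.

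The first step is to record what the left-hand side looks like. Conjugation by a product of single-qubit unitaries preserves tensor-product structure and support. In particular, $L X_1 L^\dagger = (U_1 X_1 U_1^\dagger) \otimes I \otimes I$ is a traceless Hermitian unitary supported on qubit $1$ alone. Its operator Schmidt rank across any cut is therefore $1$.

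The second step is the key computation: conjugate $X_1$ by $CCZ$. A direct check on computational-basis states gives
\[
CCZ \, X_1 \, CCZ = X_1 \, CZ_{23}.
\]
This follows because $CCZ = I - 2\,\Pi_{111}$, and commuting $X_1$ through it flips the control condition on qubit $1$. Then
\[
L X_1 L^\dagger = F^\dagger X_1 CZ_{23} F = (F^\dagger X_1 F)(F^\dagger CZ_{23} F).
\]
Here $F^\dagger X_1 F$ is a Pauli string, and $F^\dagger CZ_{23} F$ is a Clifford that is entangling across some cut, since $CZ_{23}$ has operator Schmidt rank $2$ across the cut $\{2\}:\{3\}$ and Clifford conjugation preserves its Clifford type.

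I expect the main obstacle to be the bookkeeping that compares these two sides. Conjugation by $F$ can move supports around, so locality alone is not invariant. I would use an invariant that is preserved by both kinds of conjugation. The cleanest choice is the spectrum together with the commutation pattern with Paulis. Specifically, the set $\{P \in \mathcal{P}_3 : [L X_1 L^\dagger, P] = 0\}$, taken modulo phases, has size $2 \cdot 4 \cdot 4 \cdot 2$, because it is all Paulis on qubits $2,3$ combined with a $2$-element subgroup on qubit $1$. This is wrong, however, because conjugation by $L$ does not map Paulis to Paulis.

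The better invariant is the following. Apply the identity to all Paulis at once. The map $P \mapsto L P L^\dagger$ must send every Pauli to $F^\dagger (CCZ\,P\,CCZ) F$. Taking $P = X_1 X_2$ and $P = X_1$, and using that $L$ acts qubitwise, the product rule yields
\[
L X_2 L^\dagger = (L X_1 L^\dagger)^\dagger \, L X_1 X_2 L^\dagger .
\]
Explicit evaluation of $CCZ\, X_1X_2\, CCZ$ gives $X_1X_2 \, CZ_{23} CZ_{13} Z_3$. The resulting expression for $L X_2 L^\dagger$ is then $F^\dagger$-conjugate to
\[
CZ_{23}\, X_2\, CZ_{23}\, CZ_{13} Z_3 ,
\]
which is not a Clifford. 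Yet $L X_2 L^\dagger = F^\dagger (X_2\, CZ_{13}) F$ is a Clifford. This gives a contradiction, and I would verify it carefully as the crux.

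A simpler fallback is to use the third level of the Clifford hierarchy directly. Since $CCZ \in \mathcal{C}^{(3)} \setminus \mathcal{C}^{(2)}$, the identity $L = F^\dagger CCZ$ forces $L$ to map Paulis to Cliffords. A product of single-qubit unitaries mapping every Pauli to a Clifford must be a product of single-qubit third-level gates, that is, each $U_j$ is semi-Clifford, of the form $C\,\mathrm{diag}(1, e^{i\alpha})\,C'$ up to level constraints. Moreover, $L X_1 L^\dagger$ acts on qubit $1$ only, while $F^\dagger X_1 CZ_{23} F$ is a Clifford whose induced symplectic action is nontrivial on two qubits. Comparing the induced actions on $\mathcal{P}_3$ by conjugation gives the contradiction. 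For general $C_e$ with $|e| \geq 3$, restrict as stated: fix the other controls to $\ket{1}$ to reduce to the three-qubit case.
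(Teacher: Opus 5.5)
There is a genuine gap: the step you call the crux produces no contradiction. Simplifying, $CZ_{23}X_2CZ_{23}\,CZ_{13}Z_3 = X_2Z_3\,CZ_{13}Z_3 = X_2\,CZ_{13}$. This is a Clifford, and it is exactly the operator you compare it with, so the two sides agree.

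This is structural, not a computational slip. The map $P\mapsto F^\dagger CCZ\,P\,CCZ\,F$ is itself conjugation by a unitary, hence multiplicative. So every relation you extract from the product rule is automatically satisfied on both sides; that rule holds for any unitary $L$, and qubitwise action plays no role in it. The hypothesis $L=U_1\otimes U_2\otimes U_3$ has to enter through the concrete form $LX_1L^\dagger=(U_1XU_1^\dagger)\otimes I\otimes I$. That operator must then be compared with $F^\dagger X_1CZ_{23}F$ through a quantity invariant under conjugation by the unknown $F$.

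Your fallback never supplies such a quantity. ``Acts on qubit $1$ only'' versus ``symplectic action nontrivial on two qubits'' is a comparison of supports, and you noted yourself that $F$ can move supports. So the final ``comparing the induced actions gives the contradiction'' is asserted rather than shown. The commutant idea you discarded actually works, and your reason for discarding it is beside the point. Only $F$ needs to permute $\mathcal{P}_3$ up to phases, which already makes the number of Pauli strings commuting with an operator invariant under Clifford conjugation. Modulo phases, $8$ strings commute with $X_1CZ_{23}$, versus $16$ or $32$ for $(U_1XU_1^\dagger)\otimes I\otimes I$.

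The simplest repair is the paper's arrangement: write $F=CCZ\,L^\dagger$ and test Cliffordness directly on $P=P_1\otimes I\otimes I$. Then $FPF^\dagger=CCZ\,(V\otimes I\otimes I)\,CCZ$ with $V=U_1^\dagger P_1U_1=n_xX+n_yY+n_zZ$. This equals $n_zZ_1+(n_xX_1+n_yY_1)\,CZ_{23}$, which is not a Pauli string unless $n_x=n_y=0$. Some choice of $P_1$ avoids that case, since $U_1^\dagger XU_1$ and $U_1^\dagger ZU_1$ anticommute and cannot both be $\pm Z$. Because $F$ now acts on a Pauli, no invariant is needed.

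You can also close the argument inside your own second step. $F^\dagger X_1CZ_{23}F$ expands in exactly four distinct Pauli strings with coefficients of modulus $1/2$, whereas $(U_1XU_1^\dagger)\otimes I\otimes I$ lies in the span of $X_1,Y_1,Z_1$.

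Finally, for $|e|\geqslant3$, fixing the remaining controls to $\ket{1}$ does not turn a putative $C_e=FL$ into a three-qubit decomposition with a Clifford factor. The compressed block of $F$ need not be Clifford, or even unitary. Instead, repeat the computation using $C_eX_1C_e=X_1\otimes C_{e\setminus\{1\}}$ for a qubit $1\in e$. Its Pauli expansion has $2^{|e|-1}>1$ terms, so it is not a Pauli string.
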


\begin{proof}
    We prove the statement for $CCZ$; the general case follows from the same three-qubit obstruction on the support of any hyperedge with $|e| \geqslant 3$. Suppose, for contradiction, that $CCZ = F L$. Then $(CCZ) L^\dagger = F \in \mathcal{C}_3$, and therefore $(CCZ) L^\dagger$ must map Pauli strings to Pauli strings under conjugation.

    \textbf{Case I:} Suppose $U_1 \in \mathcal{C}_1$. Since $U_1$ is Clifford, there exists $P_1 \in \mathcal{P}_1$ such that $U_1^\dagger P_1 U_1 = rX$ or $rY$, for some phase $r$. Taking $P = P_1 \otimes I \otimes I$, Cliffordness of $(CCZ) L^\dagger$ would imply
    \begin{equation*}
        CCZ
        (U_1^\dagger P_1 U_1 \otimes I \otimes I)
        CCZ
        \in \mathcal{P}_3.
    \end{equation*}
    If $U_1^\dagger P_1 U_1 = rX$, then
    \begin{equation*}
        r\,CCZ(X \otimes I \otimes I)CCZ = r\,X \otimes CZ_{2,3},
    \end{equation*}
    which is not a Pauli string. The case $rY$ is analogous. This contradicts $(CCZ) L^\dagger\in\mathcal{C}_3$.

    \textbf{Case II:} Suppose $U_1$ is not Clifford. Then there exists $P_1 \in \mathcal{P}_1$ such that
    \begin{equation*}
        V := U_1^\dagger P_1 U_1 \notin \mathcal{P}_1.
    \end{equation*}
    Repeating the previous argument gives
    \begin{equation*}
        CCZ(V \otimes I \otimes I)CCZ \in \mathcal{P}_3.
    \end{equation*}
    If $V \otimes I \otimes I$ commutes with $CCZ$, then the left-hand side is $V \otimes I \otimes I$, which is not a Pauli string. Otherwise, write
    \begin{equation*}
        V = c_1 I + c_2 Z + c_3 X + c_4 Y.
    \end{equation*}
    Noncommutativity with $CCZ$ implies that at least one of the $X$ or $Y$ components is nonzero. For instance, if $c_3 \neq 0$, then the conjugated operator contains the term
    \begin{equation*}
        c_3\,CCZ(X \otimes I \otimes I)CCZ = c_3\,X \otimes CZ_{2,3},
    \end{equation*}
    and therefore cannot be a Pauli string. This again contradicts $(CCZ) L^\dagger \in \mathcal{C}_3$.
\end{proof}

\begin{corollary}
    By Lemma~\ref{prop:HG}, the $\nu$-compressible decomposition of any nonstabilizer HG state cannot be written using only local nonstabilizer gates. It must contain at least one nonlocal nonstabilizer operator, and hence every nonstabilizer HG state belongs to the $W$-magic class:
    \[
        \mathfrak{H} \subset \mathcal{M}_W .
    \]
\end{corollary}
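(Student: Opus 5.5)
The plan is to argue by contradiction at the level of circuits, using Lemma~\ref{prop:HG} as the only nontrivial input. Let $\ket{HG} = \prod_{e\in E} C_e\ket+^{\otimes n}$ be a reduced nonstabilizer hypergraph state. By Proposition~\ref{prop:salem-poder}, $E$ contains at least one hyperedge $e_0$ with $|e_0|\geqslant 3$. Suppose, for contradiction, that $\ket{HG}\in\mathcal M_T$. Then its $\nu$-compressible decomposition has the CTC form of Eq.~\eqref{eq:CTC}, namely $\ket{HG} = C_2\,\breve T\,C_1\ket0^{\otimes n}$, where $\breve T$ is a tensor product of single-qubit gates. The aim is to turn this into an operator identity of the form $C_e = F L$ with $F$ Clifford and $L$ local, which Lemma~\ref{prop:HG} forbids.

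\textbf{Step 1 (normalize the graph part).} Split $E$ into the size-$\leqslant 2$ edges, which give Clifford $CZ$ gates and single-qubit $Z$'s, and the higher hyperedges. Absorb $\prod_{|e|\leqslant2}C_e$ and the Hadamard layer producing $\ket+^{\otimes n}$ into the Clifford factors. After this, the higher-order product $D := \prod_{|e|\geqslant3} C_e$ satisfies
\[
D\ket+^{\otimes n} = C_2'\,\breve T\,C_1'\ket0^{\otimes n}.
\]

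\textbf{Step 2 (lift the state identity to an operator identity).} A state-level equality does not immediately give an operator equality, since stabilizers of $\ket0^{\otimes n}$ introduce freedom. I would use the fact that $C_1'\ket0^{\otimes n}$ is a stabilizer state. The local layer $\breve T$ can then be pulled through, modulo Clifford and stabilizer-group freedom, to write
\[
D = F\,L\,G,
\]
where $G$ lies in the stabilizer of $\ket+^{\otimes n}$ (products of $X$'s and Clifford diagonal-free elements). Alternatively, compare the Pauli stabilizer groups directly. The nullity $\nu$ of $D\ket+^{\otimes n}$ must equal that of the CTC state, which is at most the number $\ell$ of active local gates.

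\textbf{Step 3 (localize to three qubits).} Restrict to three qubits of $e_0$. Fix the remaining qubits by projecting onto computational-basis values that activate $C_{e_0}$ but no other higher hyperedge, or that leave only $C_{e_0}$'s three-qubit core. This yields a $CCZ$ up to Clifford corrections, of the form $CCZ = F' L'$. That contradicts Lemma~\ref{prop:HG}, whose statement explicitly extends to any $C_e$ with $|e|\geqslant 3$ by this restriction.

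I expect Step 2 to be the main obstacle. The passage from a state identity to the operator identity $CCZ = FL$ needed by Lemma~\ref{prop:HG} is not automatic, and the freedom in the stabilizer of the input state must be controlled. If the direct lift fails, I would instead take the corollary at the level the paper intends. The $\nu$-compressible decomposition can be chosen with the non-Clifford block equal to the hypergraph's own circuit, namely the $C_e$ with $|e|\geqslant3$, which are nonlocal non-Clifford operators. Lemma~\ref{prop:HG} then shows that such a block cannot be replaced by a Clifford times a local layer. Hence the decomposition contains a nonlocal nonstabilizer operator, and $\mathfrak H\subset\mathcal M_W$.
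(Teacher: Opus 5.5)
Your proposal has a genuine gap, and it sits where you suspected. Step~2 is not a technicality; it is the entire content of the claim. From $D\ket{+}^{\otimes n}=C_2'\breve T C_1'\ket0^{\otimes n}$ you only learn that $D=C_2'\breve T C_1' V$ for some unitary $V$ with $V\ket{+}^{\otimes n}=\ket0^{\otimes n}$. Such $V$ range over a coset of the unitary stabilizer of a vector, a copy of $\mathrm{U}(2^n-1)$, which is not a Pauli or Clifford object. So no identity of the form $CCZ=FL$ follows. Comparing nullities only gives a numerical constraint and cannot produce a contradiction.

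Note also that Lemma~\ref{prop:HG} only excludes $CCZ=FL$ with a Clifford on one side. A CTC circuit has Cliffords on both sides and need only be correct on the single input $\ket0^{\otimes n}$. Step~3 does not repair this: sandwiching a Clifford between computational-basis projectors on the remaining qubits generally yields neither a Clifford nor a unitary.

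Your fallback ("the decomposition can be chosen with the hypergraph's own $C_e$") is exactly the paper's one-line justification ("By Lemma~\ref{prop:HG}"). It proves the wrong quantifier. $\ket\psi\in\mathcal M_T$ means that \emph{some} circuit of the form \eqref{eq:CTC} prepares $\ket\psi$. Exhibiting one native circuit whose gate $C_e$ is not Clifford$\times$local rules out nothing.

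No argument of this operator-level type can work, because yours uses nothing beyond the CTC form. Under the CTC reading of $\mathcal M_T$ used in Notation~\ref{not:ctc} and in item (ii) of the proposition in Appendix~\ref{app:typ_states}, the inclusion is false. Let $\Gamma$ be the seven-qubit CNOT network acting as
\[
\ket{x_1x_2x_3}\ket{y}\mapsto\ket{x_1x_2x_3}\ket{y\oplus(x_1\oplus x_2,\,x_1\oplus x_3,\,x_2\oplus x_3,\,x_1\oplus x_2\oplus x_3)},
\]
and let $L=T\otimes T\otimes T\otimes T^\dagger\otimes T^\dagger\otimes T^\dagger\otimes T$. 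The integer identity
\[
4x_1x_2x_3=x_1+x_2+x_3-(x_1\oplus x_2)-(x_1\oplus x_3)-(x_2\oplus x_3)+(x_1\oplus x_2\oplus x_3)
\]
gives $\Gamma L\Gamma(\ket{x}\otimes\ket{0000})=(-1)^{x_1x_2x_3}\ket{x}\otimes\ket{0000}$. Hence
\[
\ket{\mathrm{CCZ}}\otimes\ket{+}^{\otimes4}=(I^{\otimes3}\otimes H^{\otimes4})\,\Gamma\, L\,\Gamma\,(H^{\otimes3}\otimes I^{\otimes4})\ket0^{\otimes7}.
\]
The left side is the reduced hypergraph state with the single hyperedge $\{1,2,3\}$, so it lies in $\mathfrak H$. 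The right side has exactly the form \eqref{eq:CTC} with seven local non-Clifford gates.

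In this example $\ell=7$ exceeds the standard stabilizer nullity $\nu=3$. If you intend the stricter reading $\ell\leqslant\nu$ inherited from the $\nu$-compressible theorem, that bound must actually enter the proof. You would also need state-level Clifford invariants rather than operator identities. Your outline uses neither.
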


\begin{theorem}[Hypergraph states and the fundamental $W$-magic property]\label{th:HG}
    Let $\ket H$ be an $n$-qubit nonstabilizer HG state. Then, for every $C \in \mathcal{C}_n$ and every total product state $\bigotimes_{j=1}^n \ket{\phi_j}$, with $\ket{\phi_j} \in \mathbb{C}^2$, one has
    \begin{equation*}
        C\ket H \neq \bigotimes_{j=1}^n \ket{\phi_j}.
    \end{equation*}
\end{theorem}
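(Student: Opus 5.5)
The idea is to avoid gate-level arguments like Lemma~\ref{prop:HG}, which only constrain operators and not states, and instead use an arithmetic obstruction on the Pauli expectation values of $\ket H$. Suppose, for contradiction, that $C\ket H = \bigotimes_{j=1}^n \ket{\phi_j}$ for some $C \in \mathcal{C}_n$. The plan is to show that every $\ket{\phi_j}$ must then be a single-qubit stabilizer state. That would make $C\ket H$, and hence $\ket H = C^\dagger \bigotimes_j \ket{\phi_j}$, a stabilizer state, contradicting the assumption that $\ket H \in \mathfrak{H}$ (Proposition~\ref{prop:salem-poder}).

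\textbf{Step 1 (dyadic integrality).} Write $\ket H = 2^{-n/2}\sum_x (-1)^{f(x)}\ket x$, where $f$ is the Boolean polynomial determined by the hyperedges. For any Hermitian Pauli string $P = i^{a\cdot b} X^a Z^b$, the expectation value is
\[
\bra H P \ket H = 2^{-n}\, i^{a\cdot b}\sum_x (-1)^{f(x)+f(x\oplus a)+b\cdot x}.
\]
This is real because $P$ is Hermitian. The sum contains $2^n$ terms, each equal to $\pm1$, so the sum is an even integer (for $n\geqslant1$). The prefactor $i^{a\cdot b}$ must be $\pm1$ whenever the expectation is nonzero. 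Hence $\bra H P\ket H = m_P/2^{n-1}$ with $m_P \in \mathbb{Z}$. I would verify the reality and phase bookkeeping carefully, but no property of the degree of $f$ is needed.

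\textbf{Step 2 (pulling back single-qubit Bloch vectors).} Fix a qubit $j$. Let $X_j, Y_j, Z_j$ denote the corresponding single-qubit Paulis acting on qubit $j$. Then
\[
\langle X_j\rangle^2 + \langle Y_j\rangle^2 + \langle Z_j\rangle^2 = 1
\]
in $\bigotimes_k\ket{\phi_k}$, since $\ket{\phi_j}$ is pure. Since $C$ normalizes $\mathcal{P}_n$, each $C^\dagger \sigma_j C$ with $\sigma \in \{X, Y, Z\}$ is a Hermitian Pauli string $Q_\sigma$. Its expectation value satisfies $\bra H Q_\sigma \ket H = \bra{\phi_j}\sigma\ket{\phi_j}$. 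By Step 1 we obtain integers $m_X, m_Y, m_Z$ with
\[
m_X^2 + m_Y^2 + m_Z^2 = 4^{n-1}.
\]

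\textbf{Step 3 (three-squares descent).} Squares are $0$ or $1$ mod $4$. A sum of three squares that is $\equiv 0 \pmod 4$ therefore forces all three integers to be even. Dividing by $4$ and iterating $n-1$ times leaves $m_X'^2+m_Y'^2+m_Z'^2 = 1$. Hence exactly one of the $m_\sigma$ equals $\pm 2^{n-1}$ and the others vanish. So some $\langle\sigma\rangle_{\phi_j} = \pm1$, meaning $\ket{\phi_j}$ is a Pauli eigenstate, i.e.\ a stabilizer state. Since $j$ was arbitrary, all factors are stabilizer states, giving the contradiction described above. In fact the argument shows more: any state with real $\pm$-flat amplitudes that is Clifford-equivalent to a total product state is a stabilizer state.

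The main obstacle I expect is Step 1: making sure that the phase $i^{a\cdot b}$ and the reality of the expectation value combine to give exactly $m/2^{n-1}$ with $m$ an integer, rather than $m/2^{n-1}$ times a possible factor of $i$. I would settle this by noting that the expectation of a Hermitian operator is real, so any imaginary prefactor forces the sum to vanish. Everything else is elementary. The only external input is Proposition~\ref{prop:salem-poder}, which supplies the nonstabilizerness of $\ket H$.
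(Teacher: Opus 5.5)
Your proof is correct and complete, and it takes a genuinely different route from the paper's.

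\textbf{The paper's route.} The paper works at the level of gates. It reduces to a single hyperedge $\hat e$ with $|\hat e|=3$. It then compares $\ket H = CCZ_{\hat e}H^{\otimes n}\ket0^{\otimes n}$ with a hypothetical decomposition $\ket H = C(U_1\otimes\cdots\otimes U_n)\ket0^{\otimes n}$, infers the operator identity $CCZ_{\hat e} = C(V_1\otimes\cdots\otimes V_n)$, and concludes with the gate-level obstruction of Lemma~\ref{prop:HG}.

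\textbf{Your route.} You use a Clifford-covariant, state-level arithmetic invariant instead. Every Pauli expectation value of $\ket H$ lies in $2^{-(n-1)}\mathbb{Z}$. Clifford conjugation sends the single-qubit Paulis $X_j,Y_j,Z_j$ to Hermitian Pauli strings. The mod-$4$ descent then shows that the only unit vectors in $\mathbb{R}^3$ with such dyadic coordinates are $\pm$ coordinate vectors. So every factor $\ket{\phi_j}$ is a Pauli eigenstate.

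\textbf{What your approach buys.}
\begin{itemize}
\item It avoids two steps in the paper's argument that need further justification. First, an equality of states says only that two operators agree on the single vector $\ket0^{\otimes n}$, so it does not give an equality of operators. Second, the other hyperedge gates with $|e|\geqslant 3$ are neither Clifford nor local, so they cannot simply be absorbed when reducing to the minimal case. Your argument sidesteps both.
\item It proves a stronger statement: any state with real equal-magnitude amplitudes that is Clifford-equivalent to a total product state is a stabilizer state.
\item It does not actually need Proposition~\ref{prop:salem-poder}, since nonstabilizerness of $\ket H$ is already a hypothesis.
\end{itemize}

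\textbf{What the paper's approach offers.} Its gate-level Lemma~\ref{prop:HG} is tied to the $\nu$-compressible narrative and also underlies the corollary $\mathfrak{H}\subset\mathcal{M}_W$. Your state-level invariant does not speak to that corollary.

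\textbf{On your Step~1 concern.} It resolves exactly as you propose. For odd $a\cdot b$, the operator $X^aZ^b$ is anti-Hermitian, so its expectation value is purely imaginary. Your formula shows it is real, so it must vanish. The overall sign $\pm1$ of $C^\dagger\sigma_jC$ is harmless.
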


\begin{proof}
    By definition, a HG state can be written as
    \begin{equation*}
        \ket H = \prod_{e\in E} C_e \ket+^{\otimes n}.
    \end{equation*}
    Since $\ket H$ is nonstabilizer, Proposition~\ref{prop:salem-poder} implies that at least one hyperedge $\hat e \in E$ satisfies $|\hat e| \geqslant 3$. It is enough to isolate this hyperedge, since the remaining hyperedge gates can be absorbed into the surrounding Clifford and local factors for the purpose of testing whether a total product representative exists. Thus the obstruction already appears in the minimal case $E = \{\hat e\}$ with $|\hat e| = 3$, where $C_{\hat e} = CCZ_{\hat e}$.

    Suppose, by contradiction, that the result does not hold. Then, for some $C \in \mathcal C_n$ and local unitaries $U_1, \dots, U_n$, one has
    \begin{equation*}
        \ket H = C (U_1 \otimes \dots \otimes U_n) \ket0^{\otimes n}.
    \end{equation*}
    In the minimal case, we also have
    \begin{equation*}
        \ket H = CCZ_{\hat e} H^{\otimes n} \ket0^{\otimes n}.
    \end{equation*}
    Comparing the two decompositions and absorbing the local Hadamards into the local unitaries gives, for local unitaries $V_i$, a decomposition of the form
    \begin{equation*}
        CCZ_{\hat e} = C (V_1 \otimes \dots \otimes V_n).
    \end{equation*}
    This contradicts Lemma~\ref{prop:HG}. Hence no Clifford image of $\ket H$ can be a total product state.
\end{proof}

This theorem realizes the fundamental property of $W$-magic for nonstabilizer HG states. Their entanglement is intrinsically tied to nonstabilizerness and cannot be generated from a magic-infused product state by Clifford operations alone.

\subsection{Dicke states}

Dicke states, which include $W$ states, combine permutation-symmetric entanglement with nonstabilizer structure. We use them as a second class of examples supporting the fundamental property of $W$-magic. Although we do not provide a general proof, previous Clifford-orbit computations and our numerical searches indicate that no Clifford operation maps nontrivial Dicke states to total product states. Dicke states are defined as~\cite{Dicke_1954, Bartschi_2019}
\begin{equation*}
    \ket{D^n_k} = \frac{1}{\sqrt{\binom{n}{k}}} \sum_{w(x) = k} \ket x,
\end{equation*}
where $w(x)$ is the Hamming weight of the bitstring $x$.

\begin{remark}[Dicke states and the fundamental property of $W$-magic] 
    Let $\ket\psi = \ket{D^n_k}$ be a nontrivial Dicke state, with $0 < k < n$ and $n > 2$, and let $C \in \mathcal{C}_n$ be an arbitrary $n$-qubit Clifford gate. Then, for any total product state $\ket{\phi_1} \otimes \dots \otimes \ket{\phi_n}$, 
    \[
        C\ket\psi \neq \bigotimes_{j = 1}^n \ket{\phi_j}.
    \]
    That is, nontrivial Dicke states are expected to exhibit the fundamental property of $W$-magic in Definition~\ref{def:fund_prop_W}.
\end{remark}

Evidence for this conjecture comes from previous Clifford-orbit computations and our numerical searches. Ref.~\cite{Munizzi_2023} found no product representatives in the explored orbits of several small Dicke states, including $\ket{D^3_1} = \ket{W_3}$, $\ket{D^4_2}$, $\ket{D^4_1}$, and $\ket{D^5_1}$. Our searches likewise found no total product representative for Dicke states up to $10$ qubits. Although not an analytical proof, this supports interpreting Dicke states as $W$-magic examples with Clifford-irreducible entanglement.

\paragraph*{$T$-magic contrast.} The preceding examples concern $W$-magic states, whose entanglement is protected against Clifford disentangling. By contrast, the $T$-magic side contains states whose entanglement can be Clifford-removable. To test this, we implemented a simple Clifford-orbit search minimizing the sum of single-qubit entanglement entropies, which vanishes for pure states if and only if the state is a total product state. This heuristic found total product representatives in the Clifford orbits of CTC states up to $10$ qubits; for instance, at $n=8$ it found product representatives for $17$ out of $500$ randomly generated states with nontrivial CTC architectures. This does not characterize the full $T$-magic ensemble, but it confirms that the $T$-magic side contains nonstabilizer states whose entanglement is Clifford-removable, in sharp contrast with the $W$-magic examples above.

\bibliography{references}

\end{document}